\def\bGamma{\boldsymbol{\Gamma}}
\def\bSigma{\boldsymbol{\Sigma}}
\def\bOmega{\boldsymbol{\Omega}}
\def\mbb{\mathbf{b}}
\def\mbe{\mathbf{e}}
\def\mbg{\mathbf{g}}
\def\mbn{\mathbf{n}}
\def\mbp{\mathbf{p}}
\def\mbs{\mathbf{s}}
\def\mbt{\mathbf{t}}
\def\mbu{\mathbf{u}}
\def\mbv{\mathbf{v}}
\def\mbx{\mathbf{x}}
\def\mby{\mathbf{y}}
\def\mbz{\mathbf{z}}
\def\mbA{\mathbf{A}}
\def\mbB{\mathbf{B}}
\def\mbC{\mathbf{C}}
\def\mbG{\mathbf{G}}
\def\mbI{\mathbf{I}}
\def\mbN{\mathbf{N}}
\def\mbP{\mathbf{P}}
\def\mbR{\mathbf{R}}
\def\mbS{\mathbf{S}}
\def\mbU{\mathbf{U}}
\def\mbV{\mathbf{V}}
\def\mbW{\mathbf{W}}
\def\mbX{\mathbf{X}}
\def\mbY{\mathbf{Y}}
\def\mbZ{\mathbf{Z}}
\def\vec#1{\mathrm{vec}\left(#1\right)}
\newtheorem{theorem}{Theorem}
\newtheorem{proposition}{Proposition}
\newtheorem{lemma}{Lemma}
\newtheorem{corollary}{Corollary}
\theoremstyle{definition}
\newtheorem{definition}{Definition}
\newcommand*{\rom}[1]{\expandafter\@slowromancap\romannumeral #1@}
\begin{document}
\setlength{\abovedisplayskip}{3pt}
\setlength{\belowdisplayskip}{3pt}

\title{Matrix Completion from One-Bit Dither Samples
}

\author{Arian Eamaz, \IEEEmembership{Graduate Student Member, IEEE}, Farhang Yeganegi, and Mojtaba Soltanalian, \IEEEmembership{Senior Member, IEEE} 

\vspace{-20pt}
\thanks{This work was supported in part by the National Science Foundation Grant CCF-1704401. The conference precursor to this work was presented at the 2023 IEEE International Conference on Sampling Theory and Applications (SampTA). (\emph{Corresponding author: Arian Eamaz})}
\thanks{The authors are with the Department of Electrical and Computer Engineering, University of Illinois Chicago, Chicago, IL 60607, USA (e-mail: \emph{ \{aeamaz2, fyegan2, msol\}@uic.edu}).}
}
\markboth{Submitted to the IEEE TRANSACTIONS ON SIGNAL PROCESSING, 2023}
{Shell \MakeLowercase{\textit{et al.}}: Bare Demo of IEEEtran.cls for IEEE Journals}
\maketitle

\begin{abstract}
We explore the impact of coarse quantization on matrix completion in the extreme scenario of \emph{dithered one-bit sensing}, where the matrix entries are compared with random dither levels. In particular, instead of observing a subset of high-resolution entries of a low-rank matrix, we have access to a small number of \emph{one-bit samples}, generated as a result of these comparisons. In order to recover the low-rank matrix using its coarsely quantized known entries, we begin by transforming the problem of one-bit matrix completion (one-bit MC) with random dithering into a nuclear norm minimization problem. The one-bit sampled information is represented as linear inequality feasibility constraints. We then develop the popular \emph{singular value thresholding} (SVT) algorithm to accommodate these inequality constraints, resulting in the creation of the \emph{O}ne-\emph{B}it \emph{SVT} (OB-SVT). Our findings demonstrate that incorporating multiple random dither sequences in one-bit MC can significantly improve the performance of the matrix completion algorithm. In pursuit of achieving this objective, we utilize diverse dithering schemes, namely uniform, Gaussian, and discrete dithers. To accelerate the convergence of our proposed algorithm, we introduce three variants of the OB-SVT algorithm. Among these variants is the randomized sketched OB-SVT, which departs from using the entire information at each iteration, opting instead to utilize sketched data. This approach effectively reduces the dimension of the operational space and accelerates the convergence. We perform numerical evaluations comparing our proposed algorithm with the maximum likelihood estimation method previously employed for one-bit MC, and demonstrate that our approach can achieve a better recovery performance.
\end{abstract}

\begin{IEEEkeywords}
Coarse quantization, dithering, matrix completion, one-bit sensing, singular value thresholding.
\end{IEEEkeywords}

\setlength{\abovedisplayskip}{3pt}
\setlength{\belowdisplayskip}{3pt}

\section{Introduction}
\label{intro}
Matrix completion, which involves recovering an unknown low-rank matrix based on partial information, is a ubiquitous challenge in numerous practical fields. As a distinctive form of low-rank matrix sensing, matrix completion presents unique difficulties, particularly because the sampling matrices may not necessarily adhere to the conditions of the matrix restricted isometry property (RIP) \cite{chi2019nonconvex}. A notable revelation has surfaced: when dealing with a matrix $\mbX$ of rank $r$, and barring excessive \emph{structure}, a compact, randomly chosen subset of its entries can facilitate an exact reconstruction. This pivotal finding was initially substantiated by \cite{candes2009exact} through a meticulous analysis of a convex relaxation technique introduced by \cite{fazel2002matrix}. One parameter that plays a crucial role in determining the
feasibility of matrix completion is a certain incoherence measure, proposed in \cite{candes2009exact}.

A prime example for matrix completion applications is the Netflix challenge, where the user rating matrix is presumed to be approximately low rank owing to the widely accepted notion that only a handful of factors influence a person's taste or preference\cite{candes2009exact,chi2019nonconvex}.
Another emerging application of matrix completion is in waveform design for multiple-input and multiple-output (MIMO) radars\cite{sun2015mimo,kalogerias2013matrix}. In a comprehensive manner, the authors of \cite{sun2015mimo,sun2015waveform,kalogerias2013matrix} conducted an extensive exploration of MIMO radars employing sparse sensing and matrix completion techniques. These approaches prove highly advantageous in significantly reducing the data volume required by MIMO radars for accurate target detection and estimation. In MIMO-MC radars, the receive antennas transmit subsampled target returns to a fusion center, which fills a data matrix partially. Subsequently, MC techniques are employed to complete the data matrix, enabling the estimation of target parameters using standard array processing methods. MIMO radars equipped with sparse sensing and matrix completion techniques have the potential to drastically decrease the amount of data necessary for precise target detection and estimation. This reduction in data volume can lead to a more efficient and accurate MIMO radar system. 
.

Quantization is a crucial step in digital-signal processing that converts continuous signals into discrete representations. However, to achieve high-resolution quantization, a large number of quantization levels are required, which can result in increased power consumption, manufacturing cost, and a reduced sampling rate in analog-to-digital converters (ADCs). To mitigate these issues, researchers have explored the use of fewer quantization bits, including the extreme case of one-bit quantization where the signals are compared with a fixed threshold at the ADCs, resulting in sign outputs. This approach enables high-rate sampling while reducing implementation cost and energy consumption compared to multi-bit ADCs. One-bit ADCs have proven to be immensely valuable across a range of applications, including MIMO \cite{mezghani2018blind}, and array signal processing \cite{liu2017one}. 

Considerable research has been undertaken to examine the effects of adding a dither before the quantization process to the signal, specifically on quantization error\cite{schuchman1964dither,carbone1997quantitative}. The mean squared error (MSE) is a relevant parameter used to evaluate the performance of a quantizing system, particularly in digital signal processing applications. By employing dithering, the system's performance can be flexibly controlled, as it enables a trade-off between accuracy and resolution. Dithering has the remarkable property of reducing the overall average quantization error \cite{carbone1997quantitative}.
Multiple dithering schemes have been studied, including uniform dither, discrete dither, and Gaussian dither. Research findings indicate that a carefully chosen dither signal can significantly enhance the resolution and performance of digital instrumentation. While discrete dithering has also been analyzed, it generally does not perform as well as uniform or Gaussian dithering in terms of quantization error reduction \cite{carbone1997quantitative}. 
It is important to acknowledge that while discrete dithering may not match the performance of uniform or Gaussian dithering, it offers notable advantages in terms of practical implementation. Specifically, in the context of Cram\'er-Rao bound (CRB) analysis, Gaussian dithering exhibits better performance for input signals with bell-shaped distributions \cite{xi2020gridless}.
The random dithers are realized through a randomly dithered generator within the ADC \cite{robinson2019analog}. The source of this random dither is a low-cost thermal noise diode, which may require additional circuitry and amplifiers to enhance the noise levels; see, for instance, \cite{ali2020background} for the implementation of multiple dithering in 12-bit, 18 gigasamples per second (GS/s) ADC.

In recent works, researchers have proposed using random dithering to improve the estimation of signal characteristics, addressing this challenge \cite{xi2020gridless}. For instance, \cite{eamaz2021modified,eamaz2022covariance,eamaz2023covariance} considered Gaussian dithering for recovering covariance from one-bit measurements. 
In \cite{dirksen2022covariance}, the authors provide a comprehensive investigation of estimating the covariance matrix from one-bit measurements using uniform dithering.
Moreover, dithered one-bit quantization has found applications in various problems discussed in contemporary literature, including sparse parameter estimation \cite{thrampoulidis2020generalized}, compressed sensing \cite{ xu2020quantized,dirksen2021non,eamaz2023harnessing}, phase retrieval \cite{eamaz2022phase}, and sampling theory \cite{eamaz2022uno}.

The theory behind matrix completion typically assumes that observations consist of continuous values within the matrix. However, the Netflix problem involves "quantized" observations that are restricted to integers between 1 and 5. This presents a challenge for high-resolution matrix completion techniques, as the impact of coarse quantization becomes more apparent. This issue is particularly prominent in recommender systems, where ratings are reduced to a single bit indicating a positive or negative rating (e.g., rating music on Pandora, determining the relevance of advertisements on Hulu, or evaluating posts on sites like MathOverflow). In such cases, the assumptions made in existing matrix completion theory do not hold true\cite{davenport20141}. MIMO radar systems with one-bit ADC receivers provide another example where traditional matrix completion techniques may not be used for waveform design due to coarse quantization applied to the received measurements \cite{sun2015mimo}.

Randomized sketching is a widely used technique for dimension reduction that encompasses various iterative methods for solving linear systems and their variations, as well as extensions to non-linear optimization problems \cite{martinsson2020randomized}. 
In this approach, instead of utilizing all available information at each iteration, only a sketched subset of the information is used, which is selected through a sketching process.  
This paper introduces a novel approach where, for the first time in the literature, the first algorithmic paradigm is employed to sketch the one-bit data matrix. The proposed method utilizes sketched information exclusively at each iteration, resulting in a more efficient structure.

\subsection{Prior Arts}
In \cite{davenport20141,bhaskar20151}, the initial attempt to address one-bit matrix completion (one-bit MC) involved developing theoretical guarantees under the generalized linear model. The authors derived a regularized maximum likelihood estimate (MLE) based on a probability distribution determined by the real-valued noisy entries of the low-rank matrix. To constrain the MLE problem, the authors employed the nuclear and Frobenius norms, drawing inspiration from previous work on one-bit compressed sensing\cite{davenport20141}. They utilized projected gradient descent to solve the regularized MLE obtained. In \cite{davenport20141} and \cite{bhaskar20151}, the authors present various theoretical guarantees and necessary conditions for achieving guaranteed recovery performance in the context of the MLE problem. They establish these guarantees by considering the rank and number of measurements required. A max-norm constrained MLE for the one-bit MC problem was comprehensively investigated in \cite{cai2013max}. Authors of \cite{ni2016optimal}, developed a greedy algorithm that extends the concept of conditional gradient descent to efficiently solve the regularized MLE for the one-bit MC problem. The concept of regularized MLE for the one-bit MC problem was further extended to the quantized MC problem in \cite{bhaskar2016probabilistic}. For quantized MC problem, the authors of \cite{cao2015categorical} considered
a Trace norm regularized MLE with a likelihood function for categorical distribution. In \cite{gao2018low}, a regularized MLE for MC from quantized and erroneous
measurements was proposed which considers the sparse additive
error in the model. 

Furthermore, \cite{davenport20141} includes a set of experiments that not only validate the implications of their theorems but also demonstrate practical applications of one-bit MC. Notably, the authors compare one-bit MC to standard matrix completion methods using movie rating data, where users submit ratings ranging from 1 to 5. To utilize one-bit MC, they quantize the data to a single bit, while the standard matrix completion algorithm has access to the original ratings. \emph{Surprisingly, the approach based on binary data outperforms the standard matrix completion approach significantly}\cite{davenport20141}.

\subsection{Motivation and Contributions of the Paper}
To derive the MLE and take advantage of random dithering, authors of \cite{davenport20141} considered noisy measurements, where the noise can be seen as random dithers. However, as demonstrated in \cite{baraniuk2017exponential,eamaz2022uno,eamaz2022phase,eamaz2022covariance,xu2020quantized}, the design of random dithers is a critical aspect of one-bit sensing that can significantly improve signal reconstruction performance. However, by utilizing noise as our dithering, as was demonstrated in \cite{davenport20141}, we constrain our dithers to follow the behavior of the noise, which is not under our control. In order for the recovery of the matrix to be feasible, certain assumptions must be made on the noise distribution, as discussed in \cite{davenport20141,cai2013max,bhaskar20151,gao2018low,ni2016optimal}. In practical scenarios, however, we cannot assume that the noise follows a distribution with properties that ensure good recovery performance for the regularized MLE problem.

Dithering is particularly relevant in matrix completion scenarios like recommendation systems, where users may prefer to compare products rather than provide exact ratings. This approach can improve the user interface's usability and, in some cases, enhance the accuracy of ratings \cite{bose2018low,davenport20141}. For example, a user may evaluate an anime's quality by comparing it to their favorite show, such as Jujutsu Kaisen, and rate it on IMDb accordingly, or simply give it a thumbs-up or thumbs-down as one-bit data.

The concept of dithered one-bit MC can be applied in digital signal processing applications where ADCs benefit from the advantages of efficient sampling in matrix completion, treating low-rate samples as partially observed data matrices, and the cost-effectiveness of coarse quantization. 
This approach holds great potential for applications such as MIMO-MC \cite{sun2015mimo} and WS-MIMO \cite{sun2019target}. 
Our main contributions in this paper are:\\
\textbf{1) Combined matrix completion with dithered one-bit quantization framework.} 
In the proposed dithered one-bit MC framework, we leverage upon the
benefits of both one-bit sensing with random dithering and efficient sampling in the matrix completion. We frame this as a nuclear norm minimization problem, constrained by a \emph{linear inequality} system obtained from the one-bit sensing with multiple dithering scheme. \emph{Our formulation allows for the freedom to design or select appropriate dithers to enhance the recovery performance.} 
\\
\textbf{2) SVT-based recovery.} 
We propose the \emph{O}ne-\emph{B}it \emph{SVT} (OB-SVT) algorithm, which utilizes an SVT specifically developed to address nuclear norm minimization problems with linear inequality constraints. To enhance the convergence rate and recovery performance, we propose additional variants of OB-SVT. Specifically, we incorporate the concept of sketch-and-project into SVT. This involves utilizing sketched one-bit measurements obtained by multiplying the sparse Gaussian sketch matrix with the one-bit data matrix for each inequality constraint. 
Our approach reduces computational load by using sketched measurements instead of processing full one-bit data from multiple comparisons, and we show its superior performance compared to the MLE method proposed by \cite{davenport20141} in both noiseless and noisy scenarios.
\\
\textbf{3) Performance guarantees.}  
Our theoretical analyses provide an upper bound on the error recovery of the one-bit MC problem that is applicable to any chosen algorithm. To achieve the recovery performance, we determine the probability and the required number of measurements. Additionally, we prove the convergence of the proposed algorithms and provide various theoretical guarantees to ensure their convergence. 
\\
\textbf{4) Recovery in the presence of additive noise.} 
The proposed OB-SVT algorithm has been slightly modified to handle the noisy one-bit MC problem. We conduct a comprehensive investigation into the performance of the proposed OB-SVT algorithms in the presence of noise. We experimentally evaluate the algorithm's performance under Gaussian and Poisson noise scenarios and observe its robustness against these types of noise. 
\\
\textbf{5) Comparing different dithering scenarios performance.}
This paper presents a comprehensive comparison of Gaussian, uniform, and discrete dithering techniques in the context of one-bit sensing for matrix completion. Notably, we introduce the use of discrete dithering. While discrete dithering may not offer the same level of performance as other schemes, its simplicity and ease of implementation enable us to generate a larger number of dithering sequences, resulting in improved recovery performance. Our numerical results show that using uniform dithering improves signal reconstruction performance when the input matrix comes from a uniform distribution. However, when measurements are corrupted by bell-shaped noise or the input follows a Gaussian distribution, Gaussian dithering performs better than uniform dithering.
\\
\textbf{6) Dithering design with adaptive scheme.}
We introduce an adaptive dither design inspired by the Bregman iterative method \cite{ma2011fixed,yin2008bregman}. 
This adaptive framework for one-bit MC, utilizes the concept of Bregman projection iterative method to iteratively design a dither that improves recovery performance. 

\subsection{Organization and Notation}


In the following section, we present an overview of one-bit quantization with random dithers. We introduce the one-bit MC problem, where the task is to reconstruct a signal from one-bit sampled measurements. We formulate this problem as a system of linear inequalities and describe the one-bit MC solver, which utilizes nuclear norm minimization for both noisy and noiseless scenarios.
Section~III introduces the OB-SVT algorithms, which are specifically designed for one-bit MC. In Section~IV, we provide various theoretical guarantees for matrix recovery from the one-bit MC problem and analyze the convergence rates of the proposed algorithms in both noisy and noiseless cases.
Section~V focuses on the adaptive thresholding process by combining the Bregman iterative method with the OB-SVT algorithms. We discuss how this approach improves the recovery performance.
In Section~VI, we present several numerical experiments to demonstrate the effectiveness of the proposed algorithms. We compare them with state-of-the-art algorithms and evaluate their performance in both noisy and noiseless scenarios using different dithering schemes.
Finally, in Section~VII, we conclude our findings and highlight the open discussion of this research.

\underline{\emph{Notation:}} Throughout this paper, we use bold lowercase and bold uppercase letters for vectors and matrices, respectively.  We represent a vector $\mathbf{x}$ and a matrix $\mbB$ in terms of their elements as $\mathbf{x}=[x_{i}]$ and $\mathbf{B}=[B_{i,j}]$, respectively. The generalized inverse operator is presented by $(\cdot)^{-}$. The sets of complex and real numbers are $\mathbb{C}$ and $\mathbb{R}$, respectively; $(\cdot)^{\top}$ and $(\cdot)^{\mathrm{H}}$ are the vector/matrix transpose and the Hermitian transpose, respectively. We define $\mathbf{x}\succeq \mathbf{y}$ as a component-wise inequality between vectors $\mathbf{x}$ and $\mathbf{y}$, i.e., $x_{i}\geq y_{i}$ for every index $i$. The function $\textrm{diag}(\mbb)$ denotes a diagonal matrix with $\{b_{i}\}$ as its diagonal elements.
The nuclear norm of a matrix $\mbB\in \mathbb{C}^{M\times N}$ is denoted $\left\|\mbB\right\|_{\star}=\sum^{r}_{i=1}\sigma_{i}$ where $r$ and $\left\{\sigma_{i}\right\}$ are the rank and singular values of $\mbB$, respectively. The maximum and minimum singular values of $\mbB$ are $\sigma_{\mathrm{max}}\left(\mbB\right)$ and $\sigma_{\mathrm{min}}\left(\mbB\right)$, respectively. The Frobenius norm of a matrix $\mathbf{B}\in \mathbb{C}^{M\times N}$ is defined as $\|\mathbf{B}\|_{\mathrm{F}}=\sqrt{\sum^{M}_{r=1}\sum^{N}_{s=1}\left|b_{rs}\right|^{2}}$, where $b_{rs}$ is the $(r,s)$-th entry of $\mathbf{B}$. The $\ell_{p}$-norm of a vector $\mathbf{b}$ is $\|\mathbf{b}\|_{p}=\left(\sum_{i}b^{p}_{i}\right)^{1/p}$. 
The Hadamard (element-wise) product is $\odot$. The vectorized form of a matrix $\mbB$ is written as $\vec{\mbB}$. Given a scalar $x$, we define the operator $(x)^{+}$ as $\max\left\{x,0\right\}$. For an event $\mathcal{E}$, $\mathbb{1}_{(\mathcal{E})}$ is the indicator function for that event meaning that $\mathbb{1}_{(\mathcal{E})}$ is $1$ if $\mathcal{E}$ occurs; otherwise, it is zero. The set $[n]$ is defined as $[n]=\left\{1,\cdots,n\right\}$. The function $\operatorname{sgn}(\cdot)$ yields the sign of its argument. The function $\log(\cdot)$ denotes the natural logarithm, unless its base is otherwise stated. The notation $x \sim \mathcal{U}(a,b)$ means a random variable drawn from the uniform distribution over the interval $[a,b]$ and $x \sim \mathcal{N}(\mu,\sigma^2)$ represents the normal distribution with mean $\mu$ and variance $\sigma^2$. A covering number is the number of $r$-balls of a given size needed to completely cover a given set $\mathcal{K}$, i.e., $\mathcal{N}\left(\mathcal{K},\|\cdot\|_2, r\right)$. The Kolmogorov $r$-entropy of a set $\mathcal{K}$ is denoted by $\mathcal{H}\left(\mathcal{K},r\right)$ defined as
the logarithm of the size of the smallest $r$-net of $\mathcal{K}$. The Hamming distance between $\operatorname{sgn}(\mbx),\operatorname{sgn}(\mby)\in\{-1,1\}^n$ 
is defined as 
\begin{equation}
\label{hamming}
d_{\mathrm{H}}(\operatorname{sgn}(\mbx),\operatorname{sgn}(\mby))=\frac{1}{n}\sum_{i=1}^{n}\mathbb{I}_{(\operatorname{sgn}(x_i)\neq \operatorname{sgn}(y_i))}.
\end{equation}
If there exists a $c>0$ such that $a\leq c b$ (resp. $a\geq c b$) for two quantities
$a$ and $b$, we have $a\lesssim b$ (resp. $a\gtrsim b$).

\section{One-Bit Matrix Completion}
\label{OB-MC}
In this section, we first present the one-bit quantization 
using multiple random dithers as a linear feasibility problem. We then formulate the one-bit MC problem as a nuclear norm minimization problem with linear inequality constraints for both noisy and noiseless scenarios. Finally, we develop the SVT algorithm in such a way to handle the linear inequality constraints and recover the low-rank matrix. We call our proposed algorithm, OB-SVT.

\subsection{Dithered One-Bit Sensing}
\label{OB}
In practice, the discrete-time samples occupy pre-determined quantized values. We denote the quantization operation on $x_k$ by the function $Q(\cdot)$. This yields the quantized signal as $r_{k} = Q(x_{k})$.
These dithers may be chosen from any distribution. 
In this paper we examine the performance of one-bit quantization under the discrete, uniform and Gaussian dithering scenarios. A Gaussian non-zero dither vector $\boldsymbol{\uptau}_{\mathcal{N}}=\left[\tau_{k}\right]$ follows the distribution $\boldsymbol{\uptau}_{\mathcal{N}} \sim \mathcal{N}\left(\mathbf{d}=\mathbf{1}d,\bSigma\right)$ with mean $d$ and the covariance matrix $\bSigma$. We also employ uniformly distributed dithers in the sequel as $\boldsymbol{\uptau}_{\mathcal{U}}\sim \mathcal{U}_{\left[a,b\right]}$.
We define the discrete dithering as $\boldsymbol{\uptau}_{\mathcal{D}}$, which contains values as $\tau_k\in\left[\frac{k}{M}D\right]$,
where $D$ represents
the peak-to-peak dither amplitude, and $k\in\{-M/ 2, \cdots,-1,1, \cdots, M / 2\}$ with an even number $M$. It is important to note that when the number $M$ of equiprobable impulses in the discrete dithering tends to infinity, it behaves similarly to a uniform distribution. The distribution of discrete dither is given by \cite{wagdy1989effect}
\begin{equation}
\label{Stephania_8}
f(\tau)=\frac{1}{M}\sum_{k}\delta\left(\tau-\frac{k}{M} D\right). 
\end{equation}
For one-bit quantization with such random dithers, $r_{k} = \operatorname{sgn}\left(x_{k}-\tau_{k}\right)$.
For notational simplicity, hereafter, we denote the random dithers by dropping the subscripts, i.e. $\boldsymbol{\uptau}=\left[\tau_{k}\right]$.

The information gathered through the one-bit sampling with random dithers may be formulated in terms of an overdetermined linear system of inequalities. We have $r_{k}=+1$ when $x_{k}>\tau_{k}$ and $r_{k}=-1$ when $x_{k}<\tau_{k}$. Collecting all the elements in the vectors as $\mathbf{x}=[x_{k}] \in \mathbb{R}^{n}$ and $\mathbf{r}=[r_{k}] \in \{-1,1\}^{n}$, therefore, one can formulate the geometric location of the signal as 
\begin{equation}
\label{eq:4}
r_{k}\left(x_{k}-\tau_{k}\right) \geq 0.
\end{equation}
Then, the vectorized representation of (\ref{eq:4}) is $\mathbf{r} \odot \left(\mathbf{x}-\boldsymbol{\uptau}\right) \succeq \mathbf{0}$
or equivalently
\begin{equation}
\label{eq:6}
\begin{aligned}
\bOmega \mathbf{x} &\succeq \mathbf{r} \odot \boldsymbol{\uptau},
\end{aligned}
\end{equation}
where $\bOmega \triangleq \operatorname{diag}\left(\mathbf{r}\right)$. Suppose $\mathbf{x},\boldsymbol{\uptau} \in \mathbb{R}^{n}$, and that $\boldsymbol{\uptau}^{(\ell)}$ denotes the random dither in $\ell$-th signal sequence, where  $\ell\in [m]$. 

For the $\ell$-th signal sequence, (\ref{eq:6}) becomes
\begin{equation}
\label{eq:7}
\begin{aligned}
\bOmega^{(\ell)} \mathbf{x} &\succeq \mathbf{r}^{(\ell)} \odot \boldsymbol{\uptau}^{(\ell)}, \quad \ell \in [m],
\end{aligned}
\end{equation}
where $\bOmega^{(\ell)}=\operatorname{diag}\left(\mathbf{r}^{(\ell)}\right)$. Denote the concatenation of all $m$ sign matrices as 
\begin{equation}
\label{eq:9}
\Tilde{\bOmega}=\left[\begin{array}{c|c|c}
\bOmega^{(1)} &\cdots &\bOmega^{(m)}
\end{array}\right]^{\top}, \quad 
\Tilde{\bOmega}\in \{-1,0,1\}^{m n\times n}.
\end{equation}
Rewrite the $m$ linear system of inequalities in  (\ref{eq:7}) as
\begin{equation}
\label{eq:8}
\Tilde{\bOmega} \mathbf{x} \succeq \operatorname{vec}\left(\mathbf{R}\right)\odot \operatorname{vec}\left(\bGamma\right),
\end{equation}
where $\mathbf{R}$ and $\bGamma$ are matrices, whose columns are the sequences $\left\{\mathbf{r}^{(\ell)}\right\}_{\ell=1}^{m}$ and $\left\{\boldsymbol{\uptau}^{(\ell)}\right\}_{\ell=1}^{m}$, respectively. 

The linear system of inequalities in (\ref{eq:8}) associated with the one-bit sampling scheme is overdetermined. We recast (\ref{eq:8}) into a \textit{one-bit polyhedron} as
\begin{equation}
\label{eq:8n}
\begin{aligned}
\mathcal{P} = \left\{\mathbf{x} \mid \Tilde{\bOmega} \mathbf{x} \succeq \operatorname{vec}\left(\mathbf{R}\right)\odot \operatorname{vec}\left(\bGamma\right)\right\}.
\end{aligned}
\end{equation}
\subsection{One-Bit MC as Nuclear Norm Minimization Problem}
Assume we apply the coarse quantization to the observed partial entries of a low-rank matrix 
$\mbX\in\mathbb{R}^{n_1\times n_2}$ of rank $r$. Define $\mathcal{P}_{\Omega}\left(\mbX\right)=\left[\widehat{X}_{i,j}\right]$ be the orthogonal projector onto the span of matrices vanishing outside of $\Omega$. These partial entries of $\mbX$ are obtained in subset $\Omega$ as below:
\begin{equation}
\label{St_1}
\begin{aligned}
\widehat{X}_{i,j} &= \begin{cases} X_{i,j} & (i,j)\in \Omega,\\ 0 & \text{otherwise}.
\end{cases}
\end{aligned}
\end{equation}
In one-bit MC, we solely observe the 
$\ell$-th one-bit data matrix $\boldsymbol{\mathcal{R}}^{(\ell)}=\left[r_{i,j}^{(\ell)}\right]\in\{-1,0,1\}^{n_1\times n_2}$, where 
the entries in $\boldsymbol{\mathcal{R}}^{(\ell)}$ are dependent on the comparison between corresponding entries in $\mathcal{P}_{\Omega}\left(\mbX\right)$ and $\ell$-th dithering matrix $\boldsymbol{\mathcal{T}}^{(\ell)}=\left[\tau_{i,j}^{(\ell)}\right]\in \mathbb{R}^{n_1\times n_2}$ according to the following relationship:
\begin{equation}
\label{St_2}
\begin{aligned}
r_{i,j}^{(\ell)} &= \begin{cases} +1 & X_{i,j}>\tau_{i,j}^{(\ell)},\\ -1 & X_{i,j}<\tau_{i,j}^{(\ell)},
\end{cases} \quad (i,j)\in\Omega, \quad \ell\in [m],
\end{aligned}
\end{equation}
and $r_{i,j}^{(\ell)}=0$, for all $(i,j)\notin\Omega,\ell\in[m]$.
Define $\mbP\in\{0,1\}^{m^{\prime}\times n_1 n_2}$ as a permutation matrix that only selects the subset $\Omega$, where $\left|\Omega\right|=m^{\prime}$. For $\ell$-th dither matrix, we formulate the obtained one-bit scheme as the following linear feasibility problem:
\begin{equation}
\label{St_3}
\mbP\bOmega^{(\ell)}\operatorname{vec}\left(\mbX\right) \succeq\mbP\left(\operatorname{vec}\left(\boldsymbol{\mathcal{R}}^{(\ell)}\right)\odot\operatorname{vec}\left(\boldsymbol{\mathcal{T}}^{(\ell)}\right)\right), \quad \ell \in [m],
\end{equation}
where $\bOmega^{(\ell)}=\operatorname{diag}\left(\operatorname{vec}\left(\boldsymbol{\mathcal{R}}^{(\ell)}\right)\right)$.
As demonstrated earlier, we rewrite \eqref{St_3} as
\begin{equation}
\label{St_4}
\begin{aligned}
\boldsymbol{\mathcal{B}} \operatorname{vec}\left(\mbX\right) &\succeq \operatorname{vec}\left(\mathbf{R}\right)\odot \operatorname{vec}\left(\bGamma\right),
\end{aligned}
\end{equation}
where $\mathbf{R}$ and $\bGamma$ are matrices, whose columns are the sequences $\left\{\mbP\operatorname{vec}\left(\boldsymbol{\mathcal{R}}^{(\ell)}\right)\right\}_{\ell=1}^{m}$ and $\left\{\mbP\operatorname{vec}\left(\boldsymbol{\mathcal{T}}^{(\ell)}\right)\right\}_{\ell=1}^{m}$, respectively, and
\begin{equation}
\label{St_5}
\boldsymbol{\mathcal{B}}=\left[\begin{array}{c|c|c}
\bOmega^{(1)}\mbP^{\top} &\cdots &\bOmega^{(m)}\mbP^{\top}
\end{array}\right]^{\top}.
\end{equation}
Therefore, to recover the low-rank matrix $\mbX$ from highly-quantized observed measurements, we must find the optimal solution from the following feasible set:
\begin{equation}
\label{St_6}
\begin{aligned}
\mathcal{F}= \left\{\mbX \mid \boldsymbol{\mathcal{B}} \operatorname{vec}\left(\mbX\right) \succeq \operatorname{vec}\left(\mathbf{R}\right)\odot \operatorname{vec}\left(\bGamma\right),~\left\|\mbX\right\|_{\star}\leq\epsilon\right\},
\end{aligned}
\end{equation}
where $\epsilon$ is the predefined threshold. The feasible set of one-bit MC is written as a \emph{nuclear norm minimization} problem as below
\begin{equation}
\label{St_7}
\begin{aligned}
\mathcal{F}^{(1)}:\quad\underset{\mbX}{\textrm{minimize}} \quad &\tau\left\|\mbX\right\|_{\star}+\frac{1}{2}\left\|\mbX\right\|^{2}_{\mathrm{F}}\\ \text{subject to} \quad &\boldsymbol{\mathcal{B}} \operatorname{vec}\left(\mbX\right) \succeq \operatorname{vec}\left(\mathbf{R}\right)\odot \operatorname{vec}\left(\bGamma\right),
\end{aligned}
\end{equation}
for some fixed $\tau\geq 0$.
The Frobenius norm is considered to control the amplitudes of the unknown data \cite{cai2010singular}.
In Section~\ref{pro}, we will propose various algorithms to tackle this problem.

\subsection{One-Bit MC With Noisy Entries}
\label{Noise}
Herein, we formulate the noisy version of one-bit MC with random dithers. Denote $\mbZ\in\mathbb{R}^{n_1\times n_2}$ as the noise matrix. The noisy one-bit samples are generated as
\begin{equation}
\label{St_8}
\begin{aligned}
r_{i,j}^{(\ell)} &= \begin{cases} +1 & X_{i,j}+Z_{i,j}>\tau_{i,j}^{(\ell)},\\ -1 & X_{i,j}+Z_{i,j}<\tau_{i,j}^{(\ell)},
\end{cases}\quad (i,j)\in\Omega,\quad\ell\in[m].
\end{aligned}
\end{equation}
Consequently, the noisy linear system of inequalities associated with \eqref{St_8} is written as
\begin{equation}
\label{St_9}   
\boldsymbol{\mathcal{B}} \left(\operatorname{vec}\left(\mbX\right)+\operatorname{vec}\left(\mbZ\right)\right) \succeq \operatorname{vec}\left(\mathbf{R}\right)\odot \operatorname{vec}\left(\bGamma\right),
\end{equation}
or equivalently,
\begin{equation}
\label{St_10}   
\boldsymbol{\mathcal{B}} \operatorname{vec}\left(\mbX\right)+\boldsymbol{\upnu}\succeq \operatorname{vec}\left(\mathbf{R}\right)\odot \operatorname{vec}\left(\bGamma\right),
\end{equation}
where $\boldsymbol{\upnu}=\boldsymbol{\mathcal{B}}\operatorname{vec}\left(\mbZ\right)$ is the noise of our system. For instance, if we consider $\operatorname{vec}\left(\mbZ\right)\sim\mathcal{N}\left(\boldsymbol{\upmu},\bSigma_z\right)$ with mean vector $\boldsymbol{\upmu}$ and covariance matrix $\bSigma_z$, the distribution of $\boldsymbol{\upnu}$ will be $\mathcal{N}\left(\boldsymbol{\mathcal{B}}\boldsymbol{\upmu}, \boldsymbol{\mathcal{B}}\bSigma_z \boldsymbol{\mathcal{B}}^{\mathrm{H}}\right)$.

To find the feasible solution from a linear feasibility problem such as $\mathbf{C}\mathbf{x}  \succeq\mathbf{b}$, we can rewrite the problem as\cite{leventhal2010randomized}
\begin{equation}
\label{St_11}
\left(\mathbf{b}-\mathbf{C}\mathbf{x}\right)^{+}=0.
\end{equation}
If our system encounters a noise vector $\mbn$, we can slightly adjust \eqref{St_11} to minimize the impact of noise using the following formula:
\begin{equation}
\label{St_12}
\left\|\left(\mathbf{b}-\mathbf{C}\mathbf{x}\right)^{+}\right\|_{\infty}\preceq \boldsymbol{\upsigma}_n,
\end{equation}
where $\boldsymbol{\upsigma}_n$ is the effect of noise. Thus, if $\mbx$ is contained within $\mathbf{C}\mathbf{x} \succeq\mathbf{b}$, there is no need to take \eqref{St_12} into account. However, if it is not, i.e. $\mbC\mbx\preceq\mbb$, we must consider the following:
\begin{equation}
\label{St_13}
\left\|\mathbf{b}-\mathbf{C}\mathbf{x}\right\|_{\infty}\preceq\boldsymbol{\upsigma}_n.
\end{equation}
Since $\mathbf{C}\mathbf{x} \preceq\mathbf{b}$, \eqref{St_13} is equivalent to
\begin{equation}
\label{St_14}
\mathbf{b}-\boldsymbol{\upsigma}_n\preceq\mathbf{C}\mathbf{x}\preceq \mbb.
\end{equation}
By applying the same process to \eqref{St_9}, the modified linear feasibility constraint is given by
\begin{equation}
\label{St_140}
\begin{aligned}
\mathcal{Y}= \left\{ \left\{\boldsymbol{\mathcal{B}}\operatorname{vec}\left(\mbX\right) \succeq \mbt\right\}\cup \left\{ \mbt-\boldsymbol{\upsigma}_z\preceq\boldsymbol{\mathcal{B}}\operatorname{vec}\left(\mbX\right) \preceq \mbt\right\}\right\},
\end{aligned}
\end{equation}
where $\mbt=\operatorname{vec}\left(\mathbf{R}\right)\odot \operatorname{vec}\left(\bGamma\right)$, and $\boldsymbol{\upsigma}_z$ is the effect of $\operatorname{vec}\left(\mbZ\right)$. Note that $\boldsymbol{\upsigma}_z$ is chosen based on the upper bound of the second norm of noise. Therefore, the nuclear norm minimization problem $\mathcal{F}^{(1)}$ is reformulated as
\begin{equation}
\label{St_15}
\begin{aligned}
\mathcal{F}^{(2)}:\quad\underset{\mbX}{\textrm{minimize}} \quad &\tau\left\|\mbX\right\|_{\star}+\frac{1}{2}\left\|\mbX\right\|^{2}_{\mathrm{F}}\\\text{subject to} \quad &\left(\mbt-\boldsymbol{\mathcal{B}}\operatorname{vec}\left(\mbX\right)\right)^{+}\preceq\boldsymbol{\upsigma}_{z}.
\end{aligned}
\end{equation}

\section{Proposed Algorithms}
\label{pro}
The main contribution in the literature on one-bit MC problem is the introduction of the regularized MLE approach \cite{davenport20141,cai2010singular,bhaskar20151,bhaskar2016probabilistic,ni2016optimal}. In this section, we present a novel perspective by considering one-bit MC as a nuclear norm minimization problem. To solve this problem, we adapt the popular algorithm SVT to its one-bit versions which we will call them \emph{O}ne-\emph{B}it \emph{SVT} (OB-SVT) algorithms. 
\subsection{OB-SVT-I}
\label{SVT}
To tackle $\mathcal{F}^{(1)}$, we employ the SVT algorithm generalized for linear inequality constraints inspired by \cite{cai2010singular} as follows: Denote a linear transformation $\mathcal{A}: \mathbb{R}^{n_1\times n_2}\rightarrow \mathbb{R}^{m m^{\prime}}$ and $\mathcal{A}^{\star}:\mathbb{R}^{m m^{\prime}}\rightarrow \mathbb{R}^{n_1\times n_2}$ as its adjoint operator. In $\mathcal{F}^{(1)}$, we have $\mathcal{A}\left(\mbX\right)=\boldsymbol{\mathcal{B}}\operatorname{vec}\left(\mbX\right)$. Then the Lagrangian for this problem is of the form
\begin{equation}
\label{St_19}
\mathcal{L}(\mbX,\boldsymbol{y})=\tau\left\|\mbX\right\|_{\star}+\frac{1}{2}\left\|\mbX\right\|^{2}_{\mathrm{F}}+\boldsymbol{y}^{\top} \left(\mbt-\mathcal{A}(\mbX)\right),
\end{equation}
where $\boldsymbol{y}$ is the Lagrangian multiplier and $\mbt=\operatorname{vec}\left(\mathbf{R}\right)\odot \operatorname{vec}\left(\bGamma\right)$.
The Karush-Kuhn-Tucker (KKT) conditions dictate that when dealing with inequality constraints, the Lagrange multiplier must be nonnegative, i.e., $\boldsymbol{y}\succeq 0$. Drawing inspiration from Uzawa's method \cite[Section~3.2]{cai2010singular}, we introduce slight modifications to the SVT algorithm to accommodate the inequality constraints and ensure satisfaction of $\boldsymbol{y}\succeq 0$ at every iteration $k$. The resulting expression is 
\begin{equation}
\label{St_20}
\left\{\begin{array}{l}
\mbX^{(k)}=\arg \min \mathcal{L}\left(\mbX, \boldsymbol{y}^{(k-1)}\right),\\
\boldsymbol{y}^{(k)}=\left(\boldsymbol{y}^{(k-1)}+\delta_k \left(\mbt-\mathcal{A}\left(\mbX^{(k)}\right)\right)\right)^{+},
\end{array}\right.
\end{equation}
where $\{\delta_k\}$ are positive step sizes.
If we consider the singular value decomposition (SVD) of $\mbX$ as $\mbX=\mbU\bSigma\mbV^{\top}$ and $\{\sigma_i\}$ as its singular values, the first step in \eqref{St_20} can be easily solved by using the singular value shrinkage operator comprehensively investigated in \cite{cai2010singular,ma2011fixed} which applies the partial SVD
to achieve the low-rank matrix structure as 
$\mathcal{D}_\tau(\mbX)=\mbU \mathcal{D}_\tau(\bSigma) \mbV^{\top},~ \mathcal{D}_\tau(\boldsymbol{\Sigma})=\operatorname{diag}\left(\left(\sigma_i-\tau\right)^{+}\right)$,
where $\tau\geq 0$ is the predefined threshold.
The SVT algorithm is a popular approach for matrix sensing (here matrix completion), where the partial SVD can be numerically calculated using Krylov subspace methods (e.g. Lanczos algorithm)
\cite{van1996matrix}. Consequently, 
the update process \eqref{St_20} is rewritten as 
\begin{equation}
\label{St_22}
\left\{\begin{array}{l}
\mbX^{(k)}=\mathcal{D}_\tau\left(\mathcal{A}^{\star}\left(\boldsymbol{y}^{(k-1)}\right)\right),\\
\boldsymbol{y}^{(k)}=\left(\boldsymbol{y}^{(k-1)}+\delta_k \left(\mbt-\mathcal{A}\left(\mbX^{(k)}\right)\right)\right)^{+}.
\end{array}\right.
\end{equation} 
For the noisy scenario \eqref{St_15},
the Lagrangian formulation for
$\mathcal{F}^{(2)}$ is given by
\begin{equation}
\label{St_190}
\begin{aligned}
\mathcal{L}^{n}(\mbX, \boldsymbol{y}_1)=\tau\left\|\mbX\right\|_{\star}+\frac{1}{2}\left\|\mbX\right\|^{2}_{\mathrm{F}}+\boldsymbol{y}^{\top}_1 \left(\left(\mbt-\mathcal{A}(\mbX)\right)^{+}-\boldsymbol{\upsigma}_{z}\right),
\end{aligned}
\end{equation}
where $\boldsymbol{y}_1$ is the Lagrangian multiplier parameter. We call this algorithm \emph{OB-SVT-I}. 

As outlined in Subsection~\ref{Noise}, incorporating the noisy inequality constraint \eqref{St_12} can be achieved by only considering $\boldsymbol{\mathcal{B}}\operatorname{vec}\left(\mbX^{(k)}\right)\preceq \mbt$ at each iteration $k$. Otherwise, i.e., $\boldsymbol{\mathcal{B}}\operatorname{vec}\left(\mbX^{(k)}\right)\succeq \mbt$, there is no need to update the Lagrangian multiplier $\boldsymbol{y}_1$. As a result, the operator $(\cdot)^{+}$ can be easily removed from \eqref{St_190}:
\begin{equation}
\label{St_1900}
\begin{aligned}
\mathcal{L}^{n}(\mbX, \boldsymbol{y}_1)=\tau\left\|\mbX\right\|_{\star}+\frac{1}{2}\left\|\mbX\right\|^{2}_{\mathrm{F}}+\boldsymbol{y}^{\top}_1 \left(\mbt-\boldsymbol{\upsigma}_{z}-\mathcal{A}(\mbX)\right).
\end{aligned}
\end{equation}
Denote $\mbt_1=\mbt-\boldsymbol{\upsigma}_{z}$. According to Uzawa's method, the following update process is proposed to address $\mathcal{F}^{(2)}$:
\begin{equation}
\label{Steph_10}
\left\{\begin{array}{l}
\mbX^{(k)}=\mathcal{D}_\tau\left(\mathcal{A}^{\star}\left(\boldsymbol{y}_1^{(k-1)}\right)\right),\\
\boldsymbol{g}^{(k)}=\left(\mbt_1-\mathcal{A}\left(\mbX^{(k)}\right)\right)^{+}\mathbb{1}_{\left(\mathcal{A}\left(\mbX^{(k)}\right)\preceq \mbt\right)},\\
\boldsymbol{y}^{(k)}_1= \left(\boldsymbol{y}^{(k-1)}_1+\delta_k \boldsymbol{g}^{(k)}\right)^{+}.
\end{array}\right.
\end{equation} 
As evident in the OB-SVT-I algorithm, there is no guarantee in terms of consistency of the solution,
meaning that the solution may not satisfy all linear inequalities. In order to ensure that the solution adheres to all inequalities, inspired by linear inequality solvers, we will introduce the OB-SVT-II. This variant will apply the operator $(\cdot)^{+}$ to linear constraints at each iteration, updating the solution only when it fails to satisfy the linear constraints and continuing the process until it does.
\subsection{OB-SVT-II}
By drawing inspiration from the Kaczmarz algorithm \cite{leventhal2010randomized}, where 
inequality constraints are 
represented by
\begin{equation}
\left(\mbt-\boldsymbol{\mathcal{B}}\operatorname{vec}\left(\mbX\right)\right)^{+}=0,  
\end{equation}
we can reframe the Lagrangian 
formulation associated with $\mathcal{F}^{(1)}$ as follows:
\begin{equation}
\label{Steph190}
\mathcal{L}^{(1)}(\mbX, \boldsymbol{y})=\tau\left\|\mbX\right\|_{\star}+\frac{1}{2}\left\|\mbX\right\|^{2}_{\mathrm{F}}+\boldsymbol{y}^{\top} \left(\mbt-\mathcal{A}(\mbX)\right)^{+}.
\end{equation}
The intriguing aspect of this new formulation is that the inequality constraint behaves akin to an equality constraint, removing the necessity for a 
nonnegative constraint on $\mby$. As a result, the algorithm designed to address this Lagrangian problem may be expressed as follows: 
\begin{equation}
\label{St_220}
\left\{\begin{array}{l}
\mbX^{(k)}=\mathcal{D}_\tau\left(\mathcal{A}^{\star}\left(\boldsymbol{y}^{(k-1)}\right)\right),\\
\boldsymbol{y}^{(k)}=\boldsymbol{y}^{(k-1)}+\delta_k \left(\mbt-\mathcal{A}\left(\mbX^{(k)}\right)\right)^{+}.
\end{array}\right.
\end{equation} 
This algorithm is named \emph{OB-SVT-II}, bears resemblance to both the least squares and Kaczmarz algorithms (with an identity sketch matrix) \cite{polyak1964gradient,leventhal2010randomized}, with the additional inclusion of a nuclear norm constraint in each iteration.
\subsection{Randomized OB-SVT}
The key advantage of randomized algorithms over regular algorithms lies in their ability to operate on batches or even single measurement per iteration, rather than requiring the entire set of measurements. This characteristic makes them highly efficient for practical applications. This reduction in complexity greatly enhances the efficiency of this algorithm in large-scale problems such as MC.

The new formulation of one-bit MC in \eqref{Steph190} opens up avenues for exploring the feasibility of solving the following Lagrangian problem when only partial
measurements per iteration 
are utilized:
\begin{equation}
\label{MySt❤️❤️}
\mathcal{L}^{(2)}(\mbX, \boldsymbol{y}_s)=\tau\left\|\mbX\right\|_{\star}+\frac{1}{2}\left\|\mbX\right\|^{2}_{\mathrm{F}}+\boldsymbol{y}^{\top}_s \left(\mbS\mbt-\mbS\boldsymbol{\mathcal{B}}\operatorname{vec}\left(\mbX\right)\right)^{+},
\end{equation}
where $\mbS\in\mathbb{R}^{s\times m m^{\prime}}$ is a sketch matrix with $s<m^{\prime}$ as the sketch length, which randomly selects a row or a block of matrix $\boldsymbol{\mathcal{B}}\in\{-1,0,1\}^{m m^{\prime}\times n_1 n_2}$. One well-known example of a sketch matrix is the sparse Gaussian matrix, which selects a block of the matrix using the following definition \cite{derezinski2022sharp}:
\begin{equation}
\label{MMYST}
\mbS^{\top}=\left[\begin{array}{c|c|c}
\mathbf{0}_{s\times p} &\mbG &\mathbf{0}_{s\times ((m-1) m^{\prime}-p)}
\end{array}\right]^{\top},
\end{equation}
where $\mbG$ is a $s\times m^{\prime}$ Gaussian matrix, whose entries are i.i.d. following the distribution $\mathcal{N}\left(0,1\right)$, and $p=m^{\prime}\alpha,~\alpha\in[m-1]$. 
Assume $\boldsymbol{y}_s\in \mathbb{R}^{s}$ and
$\mathcal{M}\left(\mbX\right)=\mbS\boldsymbol{\mathcal{B}}\operatorname{vec}\left(\mbX\right)$. Then the \emph{randomized OB-SVT} algorithm is written as
\begin{equation}
\label{ST❤️❤️}
\left\{\begin{array}{l}
\mbX^{(k)}=\mathcal{D}_\tau\left(\mathcal{M}^{\star}\left(\boldsymbol{y}^{(k-1)}_s\right)\right),\\
\boldsymbol{y}^{(k)}_s=\boldsymbol{y}^{(k-1)}_s+\delta_k \left(\mbS\mbt-\mathcal{M}\left(\mbX^{(k)}\right)\right)^{+}.
\end{array}\right.
\end{equation} 

\subsection{Adaptive Thresholding With Bregman Iterative Method}
\label{Adaptive_T}
We will employ the concept of 
\emph{Bregman iterative method} to devise a random dither that enhances the performance of one-bit low-rank matrix sensing. We will begin by presenting an overview of the Bregman iterative method. 

Define a convex function $g(\cdot)$, the Bregman distance of the vector $\mbu$ from the vector $\mbv$ is defined as
\begin{equation}
\label{St_23}
D_g^{\mbp}(\mbu,\mbv)\triangleq g(\mbu)-g(\mbv)- \mbp^{\top} (\mbu-\mbv),
\end{equation}
where $\mbp \in \partial g(\mbv)$ is some subgradient in the subdifferential of $g$ at the point $\mbv$. Authors of \cite{yin2008bregman} proposed solving the $\ell_1$ minimization problem for $\mbA\mbx=\mbb$, 
by replacing 
$g(\mbx)=\|\mbx\|_1$ with its Bregman distance at each iteration as follows:
\begin{equation}
\label{St_24}
\left\{\begin{array}{l}
\mbx^{(k+1)}= 
\arg \min D_g^{\mbp^{(k)}}\left(\mbx, \mbx^{(k)}\right)+\frac{1}{2}\left\|\mbA\mbx-\mbb\right\|_2^2,\\
\mbp^{(k+1)}=\mbp^{(k)}+\mbA^{\top}\left(\mbb-\mbA\mbx^{(k+1)}\right),
\end{array}\right.
\end{equation} 
which is interestingly equivalent to 
\begin{equation}
\label{St_25}
\left\{\begin{array}{l}
\mbb^{(k+1)}=\mbb+\left(\mbb^{(k)}-\mbA\mbx^{(k)}\right),\\
\mbx^{(k+1)}= 
\arg \min~\|\mbx\|_1+\frac{1}{2}\left\|\mbA\mbx-\mbb^{(k+1)}\right\|_2^2.
\end{array}\right.
\end{equation} 
This approach 
has been extended to the low-rank matrix sensing and matrix completion 
in \cite{ma2011fixed} as 
\begin{equation}
\label{St_26}
\left\{\begin{array}{l}
\mbb^{(k+1)}= 
\mbb+\left(\mbb^{(k)}-\mathcal{A}\left(\mbX^{(k)}\right)\right),\\
\mbX^{(k+1)}= 
\arg\min \tau\|\mbX\|_{\star}+\frac{1}{2}\left\|\mathcal{A}(\mbX)-\mbb^{(k+1)}\right\|_2^2.
\end{array}\right.
\end{equation} 
The study conducted in \cite{ma2011fixed} demonstrated that by utilizing the Bregman iterative method in conjunction with the nuclear norm minimization problem, both the convergence rate and accuracy of the algorithm can be improved.

To tailor the Bregman iterative method to the one-bit MC problem, it appears sufficient to incorporate the impact of inequality constraints, i.e., $\mathcal{A}\left(\mbX\right)\succeq\mbb$, in the following manner:
\begin{equation}
\label{St_27}
\left\{\begin{array}{l}
\mbb^{(k+1)}= 
\mbb+\left(\mbb^{(k)}-\mathcal{A}\left(\mbX^{(k)}\right)\right)^{+},\\
\mbX^{(k+1)}= 
\arg\min \tau\|\mbX\|_{\star}+\frac{1}{2}\left\|\left(\mbb^{(k+1)}-\mathcal{A}\left(\mbX\right)\right)^{+}\right\|_2^2.
\end{array}\right.
\end{equation} 
Also as discussed in \cite{ma2011fixed}, the second step of \eqref{St_27} can be tackled by the SVT algorithm. Consequently, the Bregman iterative method for the one-bit MC problem is formulated as 
\begin{equation}
\label{St_28}
\left\{\begin{array}{l}
\mbt^{(k+1)}= 
\mbt+\left(\mbt^{(k)}-\boldsymbol{\mathcal{B}}\operatorname{vec}\left(\mbX^{(k)}\right)\right)^{+},\\\mbX^{(k+1)}=
\arg\min \tau\|\mbX\|_{\star}+\frac{1}{2}\left\|\left(\mbt^{(k+1)}-\boldsymbol{\mathcal{B}}\operatorname{vec}\left(\mbX\right)\right)^{+}\right\|_2^2,
\end{array}\right.
\end{equation} 
where $\tau$ is a predefined threshold and $\mbt^{(k)}=\operatorname{vec}\left(\mathbf{R}\right)\odot \operatorname{vec}\left(\bGamma^{(k)}\right)$. After setting the current dither as fixed, we can obtain $\mbt=\operatorname{vec}\left(\mathbf{R}\right)\odot \operatorname{vec}\left(\bGamma\right)$ and subsequently update the dither at each iteration $k$ using the first step of \eqref{St_28}. This process is repeated until $\left\|\mbt^{(k+1)}-\mbt^{(k)}\right\|_2\leq \varepsilon$, where $\varepsilon$ is a positive constant value. Note that the second step of \eqref{St_28} is tackled by the SVT with inequality constraints proposed in Section~\ref{pro}.

\section{Theoretical Discussion}
Our focus is on the possibility of performing one-bit sensing for matrix completion while establishing an upper bound on the recovery error, regardless of the specific algorithm employed. Previous works, such as \cite{davenport20141,cai2013max}, have provided theoretical guarantees by considering the distribution of noise and the KL-divergence for the regularized MLE problem. However, in this paper, we aim to derive a more general approach that can be applied to one-bit MC, irrespective of the noise distribution.
Subsequently, we delve into the convergence analysis of the proposed OB-SVT algorithms in both noisy and noiseless scenarios.
\subsection{One-Bit MC Theoretical Guarantees}
\label{theory}
In the work presented by the authors in \cite{candes2010matrix}, noisy matrix completion is formulated as a nuclear norm minimization problem. This approach has led to the derivation of rigorous theoretical guarantees, further advancing the understanding and development of this intriguing field. The quantized measurements can be expressed as follows:
\begin{equation}
\label{Steph2}
\mathcal{P}_{\Omega}\left(\mbY\right) = \mathcal{P}_{\Omega}\left(\mbX\right)+\mbN,
\end{equation}
where the matrix $\mbN\in\mathbb{R}^{n_1\times n_2}$ presents the effect of quantization as the additive noise matrix. Therefore, the nuclear norm minimization problem 
associated with the quantized MC is given by
\begin{equation}
\label{Steph3}
\begin{aligned}
&\underset{\mbX}{\textrm{minimize}}\quad \left\|\mbX\right\|_{\star}\\
&\text{subject to} \quad \left\|\mathcal{P}_{\Omega}(\mbX-\mbY)\right\|_{\mathrm{F}} \leq q,
\end{aligned}
\end{equation}
where the parameter $q$ denotes the impact of the quantization process.
The set of
linear constraints in
\eqref{St_7} can be expressed as the following system of equations, taking into account the slack vector $\boldsymbol{\mathcal{E}}$:
\begin{equation}
\label{kvm_2}
\mathcal{A}\left(\mbX\right)=\operatorname{vec}\left(\mathbf{R}\right)\odot \operatorname{vec}\left(\bGamma\right)+\boldsymbol{\mathcal{E}}.
\end{equation}
Therefore, the constraint of nuclear norm minimization problem for the one-bit MC can be rewritten as $\left\|\mathcal{A}\left(\mbX\right)-\operatorname{vec}\left(\mathbf{R}\right)\odot \operatorname{vec}\left(\bGamma\right)\right\|_{2}\leq q$. 
As the norm-$2$ can be constrained by the norm-$1$, the derived constraint can be alternatively expressed as $\left\|\mathcal{A}\left(\mbX\right)-\operatorname{vec}\left(\mathbf{R}\right)\odot \operatorname{vec}\left(\bGamma\right)\right\|_{1}\leq q^{\prime}$. In the rest of this section, our aim is to theoretically investigate the parameter $q^{\prime}$ and establish its upper bound. Define the following operator:
\begin{definition}
\label{def_1}
For a matrix $\mbX=[X_{i,j}]\in\mathbb{R}^{n_1\times n_2}$ and the $\ell$-th dither matrix $\boldsymbol{\mathcal{T}}^{(\ell)}=\left[\tau_{i,j}^{(\ell)}\right]\in \mathbb{R}^{n_1\times n_2}$, define $d_{i,j}^{(\ell)}=\left|X_{i,j}-\tau_{i,j}^{(\ell)}\right|$ as a distance between the $(i,j)$-th entries of $\mbX$ and $\boldsymbol{\mathcal{T}}^{(\ell)}$ for $(i,j)\in\Omega$ and $\ell=[m]$. Then, we denote the average of such distances by
\begin{equation}
\label{a_3}
\begin{aligned}
T_{\mathrm{ave}}(\mbX) &= \frac{1}{mm^{\prime}}\left\|\mathcal{A}\left(\mbX\right)-\operatorname{vec}\left(\mathbf{R}\right)\odot \operatorname{vec}\left(\bGamma\right)\right\|_{1}, \\&=\frac{1}{mm^{\prime}}\sum_{(i,j)\in\Omega}\sum_{\ell=1}^{m}\left|X_{i,j}-\tau_{i,j}^{(\ell)}\right|,
\end{aligned}
\end{equation}
where $m^{\prime}=|\Omega|$. 
\end{definition}
It is important to note that the guarantee is obtained under the uniform dithering scheme. In the following definition, we state the consistent reconstruction property which will be our assumption in the provided theorem:
\begin{definition}
\label{def_2}
Define a low-rank matrix as $\mbX=[X_{i,j}]\in\mathbb{R}^{n_1\times n_2}$ and the $\ell$-th dither matrix by $\boldsymbol{\mathcal{T}}^{(\ell)}=\left[\tau_{i,j}^{(\ell)}\right]\in \mathbb{R}^{n_1\times n_2}$ for $\ell\in[m]$. Denote $\bar{\mbX}=[\bar{X}_{i,j}]\in\mathbb{R}^{n_1\times n_2}$ as a solution obtained by an arbitrary reconstruction algorithm addressing the problem \eqref{St_7}. Then, such a reconstruction algorithm is said to be consistent when
\begin{equation}
\label{a_1}
\operatorname{sgn}\left(X_{i,j}-\tau_{i,j}^{(\ell)}\right)=\operatorname{sgn}\left(\bar{X}_{i,j}-\tau_{i,j}^{(\ell)}\right),\quad (i,j)\in\Omega,~\ell\in[m],
\end{equation}
or in the matrix form
\begin{equation}
\label{a_2}
\operatorname{sgn}\left(\mathcal{P}_{\Omega}\left(\mbX-\boldsymbol{\mathcal{T}}^{(\ell)}\right)\right)=\operatorname{sgn}\left(\mathcal{P}_{\Omega}\left(\bar{\mbX}-\boldsymbol{\mathcal{T}}^{(\ell)}\right)\right),\quad\ell\in[m].
\end{equation}
\end{definition}
In the following theorem, we present the uniform recovery performance of the one-bit MC:
\begin{theorem}
\label{thr_1}
Consider the set 
\begin{equation}
\label{n_2}
\mathcal{K}_r=\left\{\mbX\in\mathbb{R}^{n_1\times n_2}\mid\operatorname{rank}(\mbX)\leq r,  \|\mbX\|_{\mathrm{max}}\leq\alpha\right\}.
\end{equation}
Suppose $m^{\prime}$ entries of $\mbX$ with locations sampled uniformly at random are compared with $m$ sequences of uniform thresholds generated as $\tau^{(\ell)}_{i,j}\sim \mathcal{U}_{\left[-\alpha,\alpha\right]}$ for all $(i,j)\in\Omega,\ell\in[m]$, resulting in the observed one-bit data. With a positive constant $c$,
and a probability of at least $1-2e^{-\frac{\epsilon^2 mm^{\prime}}{4\alpha^2}}$, the following upper recovery bound holds for all matrices $\mbX\in\mathcal{K}_r$ and all $\bar{\mbX}$ satisfying the consistent reconstruction property in Definition~\ref{def_2}:
\begin{equation}
\label{a_5}
\|\mbX-\bar{\mbX}\|_{\mathrm{F}}\leq 4\sqrt{\epsilon\alpha n_1n_2},
\end{equation}
where the required number of samples must satisfy
\begin{equation}
\label{samplereq}
mm^{\prime}\gtrsim\sqrt{r}\max\left(n_1,n_2\right).
\end{equation}
\end{theorem}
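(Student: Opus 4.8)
The plan is to show that the \emph{consistency cell} --- the set of matrices in $\mathcal{K}_r$ reproducing the observed sign pattern of $\mbX$ --- has small diameter, uniformly and with the stated probability. The bridge between the combinatorial condition of Definition~\ref{def_2} and the Euclidean error $\|\mbX-\bar{\mbX}\|_{\mathrm{F}}$ is the number of dither levels that \emph{separate} corresponding entries. Concretely, for a fixed pair $\mbX,\bar{\mbX}\in\mathcal{K}_r$ I would study $N(\mbX,\bar{\mbX})=\sum_{(i,j)\in\Omega}\sum_{\ell=1}^{m}\mathbb{1}_{(\operatorname{sgn}(X_{i,j}-\tau_{i,j}^{(\ell)})\neq\operatorname{sgn}(\bar{X}_{i,j}-\tau_{i,j}^{(\ell)}))}$, i.e. $mm^{\prime}$ times the Hamming distance $d_{\mathrm{H}}$ of their sign patterns. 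Since $\tau_{i,j}^{(\ell)}\sim\mathcal{U}_{[-\alpha,\alpha]}$ and both entries lie in $[-\alpha,\alpha]$, a single level separates them with probability $|X_{i,j}-\bar{X}_{i,j}|/(2\alpha)$, whence $\mathbb{E}\,N=\frac{m}{2\alpha}\,\|\mathcal{P}_{\Omega}(\mbX-\bar{\mbX})\|_{1}$, while consistency of $\bar{\mbX}$ forces $N=0$.

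First I would prove the per-pair estimate. As $N$ is a sum of $mm^{\prime}$ independent indicators bounded in $[0,1]$, the lower-tail Hoeffding inequality applied with deviation equal to the mean gives $\Pr[N=0]\leq\exp\!\left(-2(\mathbb{E}\,N)^2/(mm^{\prime})\right)$. Writing $\epsilon$ for a threshold on the average sampled error $\frac{1}{m^{\prime}}\|\mathcal{P}_{\Omega}(\mbX-\bar{\mbX})\|_{1}$, this yields the tail $\exp\!\left(-\epsilon^2 mm^{\prime}/(4\alpha^2)\right)$ of the statement, the precise constant being fixed by the normalization of $\epsilon$; this is the step where the uniform dither and the amplitude bound $\alpha$ enter. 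Thus, for any single $\bar{\mbX}$ whose average sampled $\ell_1$ deviation from $\mbX$ exceeds $\epsilon$, consistency is exponentially unlikely.

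Next I would upgrade this to a uniform statement over all of $\mathcal{K}_r$ by a covering argument, which I expect to be the main obstacle. Working with the difference set (whose members have rank at most $2r$), I would fix a minimal $\eta$-net and control its cardinality through the Kolmogorov entropy $\mathcal{H}(\mathcal{K}_r,\eta)$; using $\|\mbX\|_{\star}\leq\sqrt{r}\,\|\mbX\|_{\mathrm{F}}$ together with the amplitude bound, the log-covering number of $\mathcal{K}_r$ scales like $\sqrt{r}\max(n_1,n_2)$ up to logarithmic factors. A union bound over the net then forces $mm^{\prime}\gtrsim\sqrt{r}\max(n_1,n_2)$ so that the Hoeffding exponent dominates the entropy, which is precisely \eqref{samplereq}, and the factor $2$ in the probability is absorbed here. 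The delicate point is that the consistency event is discontinuous in $\bar{\mbX}$, so passing from net points to arbitrary $\bar{\mbX}$ needs a stability argument: I would enlarge the separating interval by the net radius $\eta$ and choose $\eta$ small enough that no sign flip is lost in the discretization.

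Finally I would convert the sampled $\ell_1$ control into the claimed Frobenius bound. On the uniform event, every consistent $\bar{\mbX}$ satisfies $\frac{1}{m^{\prime}}\|\mathcal{P}_{\Omega}(\mbX-\bar{\mbX})\|_{1}\leq\epsilon$; combined with the entrywise bound $|X_{i,j}-\bar{X}_{i,j}|\leq 2\alpha$ this gives $\|\mathcal{P}_{\Omega}(\mbX-\bar{\mbX})\|_{\mathrm{F}}^{2}\leq 2\alpha\,\|\mathcal{P}_{\Omega}(\mbX-\bar{\mbX})\|_{1}\leq 2\alpha\epsilon m^{\prime}$. It remains to pass from the sampled norm to the full norm: here the role of the $\|\cdot\|_{\mathrm{max}}\leq\alpha$ constraint in $\mathcal{K}_r$ is to bound the spikiness of the low-rank difference, which --- with the uniformly random choice of $\Omega$ and the same net as above --- yields a one-sided sampling isometry $\|\mathcal{P}_{\Omega}(\mbX-\bar{\mbX})\|_{\mathrm{F}}^{2}\gtrsim\frac{m^{\prime}}{n_1 n_2}\|\mbX-\bar{\mbX}\|_{\mathrm{F}}^{2}$ in the same sample regime. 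Chaining the two inequalities cancels $m^{\prime}$ and leaves $\|\mbX-\bar{\mbX}\|_{\mathrm{F}}^{2}\lesssim\alpha\epsilon\, n_1 n_2$, i.e. $\|\mbX-\bar{\mbX}\|_{\mathrm{F}}\leq 4\sqrt{\epsilon\alpha n_1 n_2}$ after tracking constants, completing the argument.
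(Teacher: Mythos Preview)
Your approach is viable but differs substantially from the paper's. You work with the Hamming count $N(\mbX,\bar{\mbX})$ of separating dithers and exploit that consistency forces $N=0$; you then pass from a sampled $\ell_1$ bound to the full Frobenius norm via a separate matrix-sampling isometry over $\Omega$. The paper instead studies the scalar
\[
T_{\mathrm{ave}}(\mbX)=\frac{1}{mm'}\sum_{(i,j)\in\Omega}\sum_{\ell=1}^{m}\bigl|X_{i,j}-\tau_{i,j}^{(\ell)}\bigr|
\]
and shows via Hoeffding, taken jointly over the dithers \emph{and} the random choice of $\Omega$, that it concentrates around $\tfrac{\alpha}{2}+\tfrac{\|\mbX\|_{\mathrm{F}}^{2}}{2\alpha n_1n_2}$, i.e.\ around the \emph{full} Frobenius norm from the outset. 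The bridge to consistency is not your $N=0$ event but a midpoint identity: since consistency makes $X_{i,j}-\tau$ and $\bar{X}_{i,j}-\tau$ share a sign, the midpoint $\mbZ=\tfrac{1}{2}(\mbX+\bar{\mbX})$ satisfies $|Z_{i,j}-\tau|=\tfrac{1}{2}\bigl(|X_{i,j}-\tau|+|\bar{X}_{i,j}-\tau|\bigr)$, hence $T_{\mathrm{ave}}(\mbZ)=\tfrac{1}{2}\bigl(T_{\mathrm{ave}}(\mbX)+T_{\mathrm{ave}}(\bar{\mbX})\bigr)$. Applying the concentration to $\mbX$, $\bar{\mbX}$, and $\mbZ$ and then the parallelogram law $\|\mbX-\bar{\mbX}\|_{\mathrm{F}}^{2}=2\|\mbX\|_{\mathrm{F}}^{2}+2\|\bar{\mbX}\|_{\mathrm{F}}^{2}-\|\mbX+\bar{\mbX}\|_{\mathrm{F}}^{2}$ yields \eqref{a_5} directly. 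Both routes need a covering of $\mathcal{K}_r$ and a union bound to get uniformity, which is where \eqref{samplereq} enters in either case. What the paper's argument buys is that it never conditions on $\Omega$, so no restricted-isometry step for the sampling operator is required and no separate stability argument for the discontinuous sign event is needed; what your route buys is a more transparent, one-bit-CS-style connection between consistency and Hamming distance, at the cost of that extra matrix-RIP ingredient.
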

The proof of Theorem~\ref{thr_1} is comprehensively presented in Appendix~\ref{Stephania_1}. 
\begin{corollary}
\label{col_1}
The rate of reduction for $\epsilon$ concerning $m^{\prime}$ is characterized by $\mathcal{O}\left((m m^{\prime})^{-\frac{2}{5}}\right)$.   
\end{corollary}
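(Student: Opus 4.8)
The plan is to treat $\epsilon$ in Theorem~\ref{thr_1} as a tunable parameter and to characterize the value that optimally trades off the recovery-error bound against the failure probability; the advertised rate is then simply the scaling of this optimal $\epsilon$ as $m m^{\prime}$ grows. First I would bound the expected recovery error by conditioning on the success event of Theorem~\ref{thr_1}. On that event, which carries probability at least $1-2e^{-\epsilon^2 m m^{\prime}/(4\alpha^2)}$, the error is at most $4\sqrt{\epsilon\alpha n_1 n_2}$; on the complementary failure event I would use only the crude deterministic bound inherited from membership in $\mathcal{K}_r$, namely $\|\mbX-\bar{\mbX}\|_{\mathrm{F}}\leq\|\mbX-\bar{\mbX}\|_{\mathrm{max}}\sqrt{n_1 n_2}\leq 2\alpha\sqrt{n_1 n_2}$, since both matrices have entries bounded by $\alpha$ in magnitude. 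This yields
\[
\expecE{\|\mbX-\bar{\mbX}\|_{\mathrm{F}}}\lesssim \sqrt{\epsilon\alpha n_1 n_2}+\alpha\sqrt{n_1 n_2}\,e^{-\epsilon^2 m m^{\prime}/(4\alpha^2)}.
\]

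The crucial step is to convert the exponential tail into a polynomial one so that the minimization over $\epsilon$ admits a clean closed-form scaling. Applying the elementary inequality $e^{-x}\leq 1/x$ (valid for $x>0$) with $x=\epsilon^2 m m^{\prime}/(4\alpha^2)$, the failure contribution is controlled by a term proportional to $1/(\epsilon^2 m m^{\prime})$, so the expected-error bound takes the schematic form $C_1\sqrt{\epsilon}+C_2/(\epsilon^2 m m^{\prime})$, the first term increasing in $\epsilon$ and the second decreasing. Balancing the two (equivalently, setting the derivative in $\epsilon$ to zero) gives $\epsilon^{1/2}\asymp\epsilon^{-2}(m m^{\prime})^{-1}$, i.e. $\epsilon^{5/2}\asymp (m m^{\prime})^{-1}$, whence $\epsilon=\mathcal{O}\!\left((m m^{\prime})^{-2/5}\right)$ follows immediately.

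The main obstacle—or rather the step that pins down the precise exponent—is the relaxation of the exponential tail just used. The value $2/5$ is an artifact of balancing the $\sqrt{\epsilon}$ growth of the error against the first-order surrogate $1/(\epsilon^2 m m^{\prime})$ for the failure term; a surrogate of the form $e^{-x}\lesssim x^{-k}$ would instead produce the balance $\epsilon^{1/2+2k}\asymp (m m^{\prime})^{-k}$ and hence a different power, so I would justify the choice $k=1$ as the one matching the conservative, algorithm-agnostic spirit of the guarantee in Theorem~\ref{thr_1}. I would also remark that, because the genuine exponential decay is far faster than its polynomial surrogate (at $\epsilon\asymp (m m^{\prime})^{-2/5}$ the true exponent $\epsilon^2 m m^{\prime}$ scales like $(m m^{\prime})^{1/5}\to\infty$), the resulting rate is a safe upper estimate rather than a tight one. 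Finally, since $m$ is fixed while $m^{\prime}=|\Omega|$ is the quantity being increased, the dependence on $m m^{\prime}$ specializes directly to the stated rate of reduction of $\epsilon$ in $m^{\prime}$.
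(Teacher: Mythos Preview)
Your argument does not actually pin down the exponent $2/5$; as you yourself note, replacing $e^{-x}\le 1/x$ by $e^{-x}\lesssim x^{-k}$ produces a different balance and a different rate (in fact $\epsilon\asymp(mm^{\prime})^{-2k/(1+4k)}$), and there is no principled reason to stop at $k=1$. Because the exponential tail dominates any polynomial, the ``optimal'' $\epsilon$ in your expected-error bound is not governed by this balancing at all---pushing $k\to\infty$ (or simply keeping the exponential) drives the failure term to a lower order and leaves you with no mechanism that forces the specific power $-2/5$. So the step you flag as the ``main obstacle'' is in fact a genuine gap: your optimization is under-determined and the claimed rate is an artifact of an arbitrary surrogate.

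In the paper, the $2/5$ does not come from balancing success against failure probabilities. It comes from the covering-number (union-bound) step inside the proof of Theorem~\ref{thr_1}: to make the concentration in Lemma~\ref{lem_1} uniform over $\mathcal{K}_r$, one needs $mm^{\prime}\gtrsim \mathcal{H}(\mathcal{K}_r,\rho)/\epsilon^2$ with covering radius $\rho$ equal to the target error $4\sqrt{\epsilon\alpha n_1 n_2}$. Since $\mathcal{H}(\mathcal{K}_r,\rho)\lesssim 1/\rho\sim\epsilon^{-1/2}$, the sample requirement becomes $mm^{\prime}\gtrsim\epsilon^{-5/2}$, and inverting gives $\epsilon=\mathcal{O}\!\left((mm^{\prime})^{-2/5}\right)$. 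The exponent is thus fixed by the metric-entropy scaling of $\mathcal{K}_r$ combined with the $\epsilon^2$ in the Hoeffding exponent, not by any trade-off against a worst-case error on the failure event.
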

\begin{proof}
As shown in the proof of Theorem~\ref{thr_1}, the number of measurements must meet the following bound:
$mm^{\prime}\geq \frac{2\alpha^{\frac{5}{2}}(n_1+n_2)r}{\epsilon^{\frac{5}{2}}}$.   
The lower bound of $m m^{\prime}$ depends on $\varepsilon^{-\frac{5}{2}}$, and consequently, the reduction rate of $\varepsilon$ with respect to $mm^{\prime}$ is given by $\mathcal{O}\left((mm^{\prime})^{-\frac{2}{5}}\right)$. 
\end{proof}
Note that Theorem~\ref{thr_1} remains valid even in the case of a single comparison process or $m=1$. However, if the quantization system allows for the possibility of multiple comparisons (where $m>1$), it can result in an increased number of one-bit samples without a corresponding increase in the number of high-resolution measurements.

As discussed in Section~\ref{SVT}, the obtained solution by the OB-SVT-I algorithm may not satisfy the consistent reconstruction property in Definition~\ref{def_2}. In the following theorem, we provide the upper recovery bound for such algorithms in the context of one-bit MC problem:
\begin{theorem}
\label{thr_2}
Under assumptions of Theorem~\ref{thr_1}, if a solution $\bar{\mbX}$ does not meet the consistent reconstruction property in Definition~\ref{def_2}, the following upper recovery bound holds for all $\mbX\in\mathcal{K}_r$ and all $\bar{\mbX}$ with a probability of at least $1-2e^{-\frac{\epsilon^2 mm^{\prime}}{4\alpha^2}}$:
\begin{equation}
\label{noc}
\|\mbX-\bar{\mbX}\|_{\mathrm{F}}\leq 4\sqrt{\epsilon\alpha n_1n_2}+2\alpha\sqrt{2n_1n_2d_{\mathrm{H}}\left(\operatorname{vec}\left(\mbR\right),\operatorname{vec}\left(\bar{\mbR}\right)\right)},
\end{equation} 
where $\mbR$ and $\bar{\mbR}$ are the one-bit data obtained from the data $\mathcal{P}_{\Omega}\left(\mbX\right)$ and the solution $\mathcal{P}_{\Omega}\left(\bar{\mbX}\right)$, respectively.
\end{theorem}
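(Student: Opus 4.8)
The plan is to run the argument of Theorem~\ref{thr_1} essentially verbatim and to track the single place where the consistency hypothesis of Definition~\ref{def_2} was invoked. Recall that in the consistent case every signed residual $r_{i,j}^{(\ell)}\bigl(\bar{\mbX}_{i,j}-\tau_{i,j}^{(\ell)}\bigr)$ equals $\bigl|\bar{\mbX}_{i,j}-\tau_{i,j}^{(\ell)}\bigr|\ge 0$, because $\operatorname{sgn}\bigl(\bar{X}_{i,j}-\tau_{i,j}^{(\ell)}\bigr)=r_{i,j}^{(\ell)}=\operatorname{sgn}\bigl(X_{i,j}-\tau_{i,j}^{(\ell)}\bigr)$; this nonnegativity is exactly what drives the empirical ``separation count'' to zero and yields $\|\mbX-\bar{\mbX}\|_{\mathrm{F}}^2\le 16\epsilon\alpha n_1 n_2$. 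First I would isolate this nonnegativity as the \emph{only} consequence of consistency used in the proof of Theorem~\ref{thr_1}.

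Second, I would reintroduce the mismatched entries. An index pair $((i,j),\ell)$ contributes to $d_{\mathrm{H}}\bigl(\operatorname{vec}(\mbR),\operatorname{vec}(\bar{\mbR})\bigr)$ precisely when the dither $\tau_{i,j}^{(\ell)}$ separates $X_{i,j}$ and $\bar{X}_{i,j}$, i.e.\ $r_{i,j}^{(\ell)}\neq\bar{r}_{i,j}^{(\ell)}$. On such a pair the signed residual flips from $+\bigl|\bar{X}_{i,j}-\tau_{i,j}^{(\ell)}\bigr|$ to $-\bigl|\bar{X}_{i,j}-\tau_{i,j}^{(\ell)}\bigr|$, and since $\|\mbX\|_{\mathrm{max}},\|\bar{\mbX}\|_{\mathrm{max}}\le\alpha$ and $\tau_{i,j}^{(\ell)}\in[-\alpha,\alpha]$, each flip perturbs the consistency inequality by at most a fixed multiple of $\alpha$. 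Summing over the $mm'\,d_{\mathrm{H}}\bigl(\operatorname{vec}(\mbR),\operatorname{vec}(\bar{\mbR})\bigr)$ mismatched pairs and carrying this through the same chain that produced $16\epsilon\alpha n_1 n_2$ gives
\[
\|\mbX-\bar{\mbX}\|_{\mathrm{F}}^2\le 16\epsilon\alpha n_1 n_2+8\alpha^2 n_1 n_2\,d_{\mathrm{H}}\bigl(\operatorname{vec}(\mbR),\operatorname{vec}(\bar{\mbR})\bigr).
\]
Third, the concentration machinery of Theorem~\ref{thr_1}---the Hoeffding bound over the $mm'$ one-bit comparisons together with the covering/Kolmogorov-entropy union bound over the net of $\mathcal{K}_r$---is untouched, so the whole estimate holds on the same event of probability at least $1-2e^{-\epsilon^2 mm'/(4\alpha^2)}$. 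Finally, applying $\sqrt{a+b}\le\sqrt{a}+\sqrt{b}$ to the displayed inequality separates the two contributions and yields
\[
\|\mbX-\bar{\mbX}\|_{\mathrm{F}}\le 4\sqrt{\epsilon\alpha n_1 n_2}+2\alpha\sqrt{2 n_1 n_2\,d_{\mathrm{H}}\bigl(\operatorname{vec}(\mbR),\operatorname{vec}(\bar{\mbR})\bigr)},
\]
which is the claim, since $2\alpha\sqrt{2 n_1 n_2\,d_{\mathrm{H}}}=\sqrt{8\alpha^2 n_1 n_2\,d_{\mathrm{H}}}$.

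I expect the main obstacle to be the bookkeeping in the second step: one must verify that each sign mismatch perturbs the key inequality by \emph{exactly} the constant producing the coefficient $8\alpha^2$ (equivalently $2\alpha$ after the square root), rather than a merely order-correct constant, and that this holds uniformly over $\mathcal{K}_r$ so that the mismatch term can be extracted without re-running the net argument. A secondary point to confirm is that the entrywise magnitude bound $\le\alpha$ applies to the reconstruction $\bar{\mbX}$ as well as to $\mbX$; I would either inherit $\bar{\mbX}\in\mathcal{K}_r$ from the hypotheses of Theorem~\ref{thr_1} or invoke the max-norm control enforced by the reconstruction, so that every mismatched residual is bounded by $2\alpha$ and the additive term takes precisely the stated form.
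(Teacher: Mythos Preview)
Your proposal is correct and follows essentially the same route as the paper. The paper defines $\mbZ=\tfrac12(\mbX+\bar{\mbX})$, lets $\mathcal{G}$ be the set of mismatched indices, and makes your ``perturbation'' step precise via the exact identity $|Z_{i,j}-\tau_{i,j}^{(\ell)}|=\tfrac12(|X_{i,j}-\tau_{i,j}^{(\ell)}|+|\bar X_{i,j}-\tau_{i,j}^{(\ell)}|)-\min(|X_{i,j}-\tau_{i,j}^{(\ell)}|,|\bar X_{i,j}-\tau_{i,j}^{(\ell)}|)$ on $\mathcal{G}$; since $\tau$ then lies between $X_{i,j}$ and $\bar X_{i,j}$, the $\min$ is at most $\alpha$, which yields exactly your $8\alpha^2 n_1 n_2\,d_{\mathrm{H}}$ after the parallelogram law and answers the constant-tracking concern you flagged.
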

The proof of Theorem~\ref{thr_2} is presented in Appendix~\ref{proof_thr_2}. As can be observed in Theorem~\ref{thr_2}, inconsistency of the solution results in an increase in the upper recovery bound compared to the result presented in Theorem~\ref{thr_1} for consistent reconstruction algorithms. Consequently, it is anticipated that the error in the solution obtained by OB-SVT-II is smaller than that of OB-SVT-I. This phenomenon will be demonstrated through numerical experiments in Section~\ref{numerical}.
\subsection{Convergence Analysis of Proposed Algorithms}
\label{conv1}
The OB-SVT-I algorithm is proven to converge according to the following theorem:
\begin{theorem}
\label{ST_theo1}
Assume $m$ to be the number of random dither sequences and $\delta_k $ be the step size. The OB-SVT-I algorithm, employed to solve the nuclear norm minimization problem $\mathcal{F}^{(1)}$, exhibits convergence as follows:
\begin{equation}
\label{MySt_forever}
\left\|\boldsymbol{y}^{(k)}-\boldsymbol{y}\right\|^2_2\leq \left\|\boldsymbol{y}^{(k-1)}-\boldsymbol{y}\right\|^2_2-\left(2\delta_k-\delta^2_k m\right)\left\|\mbX^{(k)}-\mbX\right\|^2_{\mathrm{F}},
\end{equation}
where $\left(2\delta_k-\delta^2_k m\right)$ is positive. 
\end{theorem}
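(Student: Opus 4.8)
The plan is to follow the Uzawa / projected-gradient convergence template of \cite{cai2010singular}, adapting it to the inequality-constrained setting and to the specific sensing operator $\mathcal{A}(\mbX)=\boldsymbol{\mathcal{B}}\operatorname{vec}(\mbX)$. Let $(\mbX,\boldsymbol{y})$ denote the optimal primal--dual pair of $\mathcal{F}^{(1)}$, so that the symbols $\mbX$ and $\boldsymbol{y}$ without superscripts in the statement are the limit point and its optimal multiplier. First I would translate the shrinkage step into a subgradient inclusion: since $\mbX^{(k)}=\mathcal{D}_\tau(\mathcal{A}^{\star}(\boldsymbol{y}^{(k-1)}))$ is the minimizer of $\mathcal{L}(\cdot,\boldsymbol{y}^{(k-1)})$, it satisfies $\mathcal{A}^{\star}(\boldsymbol{y}^{(k-1)})-\mbX^{(k)}\in\tau\,\partial\|\mbX^{(k)}\|_{\star}$, and at optimality the same first-order condition gives $\mathcal{A}^{\star}(\boldsymbol{y})-\mbX\in\tau\,\partial\|\mbX\|_{\star}$, while the KKT conditions furnish the dual fixed-point relation $\boldsymbol{y}=\bigl(\boldsymbol{y}+\delta_k(\mbt-\mathcal{A}(\mbX))\bigr)^{+}$.

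The core monotonicity step comes next. Because $\|\cdot\|_{\star}$ is convex, its subdifferential is a monotone operator; subtracting the two inclusions above and pairing the difference with $\mbX^{(k)}-\mbX$ yields $\langle \mathcal{A}^{\star}(\boldsymbol{y}^{(k-1)}-\boldsymbol{y})-(\mbX^{(k)}-\mbX),\,\mbX^{(k)}-\mbX\rangle\ge 0$. Moving the quadratic term to the right and invoking the adjoint identity $\langle\mathcal{A}^{\star}(\boldsymbol{u}),\mbH\rangle=\langle\boldsymbol{u},\mathcal{A}(\mbH)\rangle$ produces the key inequality $\langle \boldsymbol{y}^{(k-1)}-\boldsymbol{y},\,\mathcal{A}(\mbX^{(k)})-\mathcal{A}(\mbX)\rangle\ge\|\mbX^{(k)}-\mbX\|^2_{\mathrm{F}}$.

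I would then expand $\|\boldsymbol{y}^{(k)}-\boldsymbol{y}\|^2_2$ using the nonexpansiveness of the projection $(\cdot)^{+}$ onto the nonnegative orthant together with the fixed-point relation for $\boldsymbol{y}$, giving $\|\boldsymbol{y}^{(k)}-\boldsymbol{y}\|_2\le \|(\boldsymbol{y}^{(k-1)}-\boldsymbol{y})-\delta_k(\mathcal{A}(\mbX^{(k)})-\mathcal{A}(\mbX))\|_2$. Squaring and substituting the monotonicity inequality annihilates the cross term at rate $-2\delta_k\|\mbX^{(k)}-\mbX\|^2_{\mathrm{F}}$, leaving only the residual quadratic $\delta_k^2\|\mathcal{A}(\mbX^{(k)})-\mathcal{A}(\mbX)\|^2_2$ to be controlled.

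The final ingredient is the operator bound $\|\mathcal{A}(\mbH)\|^2_2\le m\|\mbH\|^2_{\mathrm{F}}$, and this is exactly where the structure of $\boldsymbol{\mathcal{B}}$ enters. Since $\boldsymbol{\mathcal{B}}$ stacks the $m$ blocks $\mbP\bOmega^{(\ell)}$, each of which selects the $\Omega$-entries of $\mbH$ and multiplies them by signs in $\{-1,1\}$, one has $\|\mathcal{A}(\mbH)\|^2_2=\sum_{\ell=1}^{m}\|\mbP\bOmega^{(\ell)}\operatorname{vec}(\mbH)\|^2_2=\sum_{\ell=1}^{m}\|\mathcal{P}_{\Omega}(\mbH)\|^2_{\mathrm{F}}=m\|\mathcal{P}_{\Omega}(\mbH)\|^2_{\mathrm{F}}\le m\|\mbH\|^2_{\mathrm{F}}$, because the sign factors are $\ell_2$-isometries and $\mathcal{P}_{\Omega}$ is a projection. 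Applying this with $\mbH=\mbX^{(k)}-\mbX$ turns the residual into $\delta_k^2 m\|\mbX^{(k)}-\mbX\|^2_{\mathrm{F}}$, which combines with the $-2\delta_k$ term to yield the stated coefficient $-(2\delta_k-\delta_k^2 m)$; its positivity is precisely the step-size condition $0<\delta_k<2/m$. I expect the principal obstacle to be the careful justification of the dual fixed-point relation and the exact constant in the monotonicity step, whereas the operator bound and the nonexpansive-projection estimate are comparatively routine once the block structure of $\boldsymbol{\mathcal{B}}$ is made explicit.
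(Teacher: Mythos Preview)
Your proposal is correct and follows essentially the same route as the paper's proof: the dual fixed-point relation together with nonexpansiveness of $(\cdot)^{+}$, the monotonicity inequality $\langle \boldsymbol{y}^{(k-1)}-\boldsymbol{y},\,\mathcal{A}(\mbX^{(k)}-\mbX)\rangle\ge\|\mbX^{(k)}-\mbX\|_{\mathrm{F}}^{2}$ (which the paper simply cites from \cite{cai2010singular}, while you spell out its origin in subgradient monotonicity), and the operator bound $\sigma_{\max}^{2}(\boldsymbol{\mathcal{B}})=m$ obtained from the block structure $\boldsymbol{\mathcal{B}}^{\top}\boldsymbol{\mathcal{B}}=m\mbP^{\top}\mbP$. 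Your direct computation $\|\mathcal{A}(\mbH)\|_2^2=m\|\mathcal{P}_{\Omega}(\mbH)\|_{\mathrm{F}}^{2}\le m\|\mbH\|_{\mathrm{F}}^{2}$ is the same calculation the paper carries out via $\boldsymbol{\mathcal{B}}^{\top}\boldsymbol{\mathcal{B}}$.
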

We thoroughly study the proof of Theorem~\ref{ST_theo1} in Appendix~\ref{Stephania_2}.
Based on the aforementioned theorem, the step size $\delta_k$ should satisfy $0 \leq \delta_k < 2/m$. This condition implies that the error between the variable and its desired value is a non-increasing function with respect to the iteration, providing evidence of the convergence of OB-SVT-I.

The convergence of the OB-SVT-II algorithm is examined through the following theorem. By establishing a convergence rate based on this theorem, we can make a meaningful comparison of its performance relative to that of the OB-SVT-I algorithm.
\label{conv2}
\begin{theorem}
\label{ST_theo2}
Let $m$ denote the number of random dither sequences, and $\delta_k$ represent the step size. The OB-SVT-II algorithm, utilized for solving the nuclear norm minimization problem $\mathcal{F}^{(1)}$, demonstrates convergence as follows:
\begin{equation}
\label{MySt_forever1}
\left\|\boldsymbol{y}^{(k)}-\boldsymbol{y}\right\|^2_2\leq \left\|\boldsymbol{y}^{(k-1)}-\boldsymbol{y}\right\|^2_2-m\delta^2_k\left\|\mbX^{(k)}-\mbX\right\|^2_{\mathrm{F}}.
\end{equation}
\end{theorem}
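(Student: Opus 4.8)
The plan is to adapt the Uzawa/dual-ascent argument used for OB-SVT-I (Theorem~\ref{ST_theo1}) to the OB-SVT-II recursion \eqref{St_220}, the essential new feature being that the rectifier $(\cdot)^{+}$ now sits \emph{inside} the dual update, acting on the residual rather than wrapping the whole multiplier. First I would fix an optimal pair $(\mbX,\boldsymbol{y})$ for $\mathcal{F}^{(1)}$ and set $\mbe_{k}=\boldsymbol{y}^{(k)}-\boldsymbol{y}$. Since $\mbX$ is feasible, $\mathcal{A}(\mbX)\succeq\mbt$, hence $\left(\mbt-\mathcal{A}(\mbX)\right)^{+}=\bzero$, so $\boldsymbol{y}$ is a fixed point of the second line of \eqref{St_220} and I may write
\begin{equation}
\mbe_{k}=\mbe_{k-1}+\delta_k\left[\left(\mbt-\mathcal{A}(\mbX^{(k)})\right)^{+}-\left(\mbt-\mathcal{A}(\mbX)\right)^{+}\right].
\end{equation}
Squaring the $\ell_2$-norm then isolates a cross term linear in $\delta_k$ and a quadratic term in $\delta_k^2$, exactly as in the OB-SVT-I computation.

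Second, I would control the cross term. The singular-value shrinkage operator $\mathcal{D}_\tau$ is the proximal map of $\tau\|\cdot\|_\star$ and is therefore firmly nonexpansive; combined with $\mbX^{(k)}=\mathcal{D}_\tau(\mathcal{A}^{\star}(\boldsymbol{y}^{(k-1)}))$ and $\mbX=\mathcal{D}_\tau(\mathcal{A}^{\star}(\boldsymbol{y}))$, this yields the monotonicity inequality $\langle\boldsymbol{y}^{(k-1)}-\boldsymbol{y},\,\mathcal{A}(\mbX)-\mathcal{A}(\mbX^{(k)})\rangle\leq-\|\mbX^{(k)}-\mbX\|^{2}_{\mathrm{F}}$ already used for OB-SVT-I (established via \cite{cai2010singular}). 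The subtlety is that the cross term here involves the \emph{projected} residual $(\mbt-\mathcal{A}(\mbX^{(k)}))^{+}$ rather than the raw residual, so this inequality does not apply verbatim. To bridge the gap I would invoke the firm nonexpansiveness of the orthant projection $(\cdot)^{+}$ together with the optimality identity $(\mbt-\mathcal{A}(\mbX))^{+}=\bzero$ to transfer the bound to the projected residual, obtaining $\langle\mbe_{k-1},(\mbt-\mathcal{A}(\mbX^{(k)}))^{+}\rangle\leq-\|\mbX^{(k)}-\mbX\|^{2}_{\mathrm{F}}$.

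Third, I would bound the quadratic term. Writing $\mathcal{A}(\mbX)=\boldsymbol{\mathcal{B}}\vec{\mbX}$ and using nonexpansiveness of $(\cdot)^{+}$ gives $\|(\mbt-\mathcal{A}(\mbX^{(k)}))^{+}\|_{2}^{2}\leq\|\mathcal{A}(\mbX^{(k)})-\mathcal{A}(\mbX)\|_2^2\leq\sigma^{2}_{\mathrm{max}}(\boldsymbol{\mathcal{B}})\|\mbX^{(k)}-\mbX\|^{2}_{\mathrm{F}}$. Here I reuse the structural computation $\boldsymbol{\mathcal{B}}^{\top}\boldsymbol{\mathcal{B}}=m\mbP^{\top}\mbP$ together with the fact that the selection/permutation matrix $\mbP$ has unit singular values, to conclude $\sigma^{2}_{\mathrm{max}}(\boldsymbol{\mathcal{B}})=m$, exactly as in the proof of Theorem~\ref{ST_theo1}. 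Assembling the two bounds gives $\|\mbe_{k}\|^{2}_{2}\leq\|\mbe_{k-1}\|^{2}_{2}-(2\delta_k-m\delta_k^2)\|\mbX^{(k)}-\mbX\|^{2}_{\mathrm{F}}$, and under the natural step-size restriction $\delta_k\leq 1/m$ the elementary inequality $2\delta_k-m\delta_k^2\geq m\delta_k^2$ collapses this to the stated bound \eqref{MySt_forever1}.

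The main obstacle I anticipate is precisely the transfer in the second step: because the rectifier is applied to the residual inside the update, I cannot eliminate it by a single nonexpansiveness step as in OB-SVT-I, where $(\cdot)^{+}$ wrapped the entire iterate. The difference between the raw and projected residuals contributes terms involving $\langle\mbe_{k-1},(\mbt-\mathcal{A}(\mbX))^{+}\rangle$-type inner products that must be shown to have a favorable sign; establishing that the passage from the raw to the projected residual is lossless (or only favorably loses), by combining the firm nonexpansiveness of $\mathcal{D}_\tau$, the firm nonexpansiveness of $(\cdot)^{+}$, and complementary slackness at the optimum, is the crux of the argument. A secondary bookkeeping point is to make the admissible step-size regime explicit, so that the coefficient $2\delta_k-m\delta_k^2$ is correctly dominated by $m\delta_k^2$; this also furnishes the quantitative rate that makes the comparison with OB-SVT-I meaningful.
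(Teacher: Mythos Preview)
Your approach diverges from the paper's in two essential ways, and both matter. The paper does \emph{not} try to transfer the Cai--Cand\`es--Shen monotonicity inequality to the projected residual; instead it exploits the elementary observation that the OB-SVT-II dual update adds a componentwise nonnegative vector at every step, so the iterates are monotone and $\boldsymbol{y}^{(k-1)}+\delta_k(\mbt-\mathcal{A}(\mbX^{(k)}))^{+}\preceq\boldsymbol{y}$. Pairing this componentwise inequality against the nonnegative vector $(\mbt-\mathcal{A}(\mbX^{(k)}))^{+}$ immediately gives $\mbe_{k-1}^{\top}(\mbt-\mathcal{A}(\mbX^{(k)}))^{+}\leq-\delta_k\|(\mbt-\mathcal{A}(\mbX^{(k)}))^{+}\|_2^2$, so after the square-expansion the cross and quadratic terms combine to $-\delta_k^2\|(\mbt-\mathcal{A}(\mbX^{(k)}))^{+}\|_2^2$. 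The paper then \emph{lower} bounds this residual norm via Hoffman's inequality, $\|(\mbt-\mathcal{A}(\mbX^{(k)}))^{+}\|_2^2\geq\sigma_{\min}^2(\boldsymbol{\mathcal{B}})\|\mbX^{(k)}-\mbX\|_{\mathrm{F}}^2$, and uses $\sigma_{\min}^2(\boldsymbol{\mathcal{B}})=m$. No step-size restriction arises.

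Your route, by contrast, upper bounds the quadratic term via $\sigma_{\max}^2(\boldsymbol{\mathcal{B}})=m$ and arrives at the OB-SVT-I coefficient $(2\delta_k-m\delta_k^2)$, which you then collapse to $m\delta_k^2$ only under $\delta_k\leq 1/m$. That restriction is precisely what Theorem~\ref{ST_theo2} is meant to eliminate: the paper explicitly contrasts OB-SVT-II with OB-SVT-I by noting that here ``the choice of step size has more flexibility since the term $m\delta_k^2$ is always a positive value.'' So even if your transfer step succeeded, you would prove a strictly weaker statement than \eqref{MySt_forever1}. And the transfer step itself is the gap you already flagged: firm nonexpansiveness of $(\cdot)^{+}$ controls norms, not the sign of $\langle\mbe_{k-1},(\mbt-\mathcal{A}(\mbX^{(k)}))^{+}\rangle$, and there is no obvious complementary-slackness identity that makes the passage from the raw residual $\mathcal{A}(\mbX)-\mathcal{A}(\mbX^{(k)})$ to the rectified one lossless in that inner product. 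The paper's argument sidesteps both issues by using the componentwise monotonicity of the dual sequence together with Hoffman's lower bound.
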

The detailed proof of Theorem~\ref{ST_theo2} is available in Appendix~\ref{Stephania_3}. As mentioned in Theorem~\ref{ST_theo1}, to ensure that OB-SVT-I converges, the term $2\delta_k-m\delta_k^2$ should be positive which implies that the step size $\delta_k$ must be lower than $2/m$.
However, Theorem~\ref{ST_theo2} for OB-SVT-II demonstrates that the choice of step size has more flexibility since the term $m\delta_k^2$ is 
always 
a positive value. Moreover, it is evident from 
\eqref{MySt_forever1} that by increasing the number of random dithers $m$ with a fixed step size $\delta_k$, the convergence of OB-SVT-II will be accelerated, which aligns with our goal. We will empirically examine and analyze the performance of both algorithms, demonstrating that this simple difference in step size enhances the convergence and performance of OB-SVT-II compared to OB-SVT-I.

In the following theorem, we derive the convergence rate of the randomized OB-SVT algorithm:
\begin{theorem}
\label{ST_theo3}
Assume $m$ to be the number of random dither sequences and $\delta_k$ be the step size. Let $\mbS\in\mathbb{R}^{s\times m m^{\prime}}$ defined in \eqref{MMYST} be the sparse Gaussian sketch matrix and $\alpha=c\sqrt{\log s}$ for a positive constant $c$.
The randomized OB-SVT algorithm with 
the sketch matrix $\mbS$,
employed to solve the nuclear norm minimization problem $\mathcal{F}^{(1)}$, exhibits convergence as follows:
\begin{equation}
\label{MySt_forever2}  
\mathbb{E}_{\mbS}\left\{\left\|\boldsymbol{y}^{(k)}_s-\boldsymbol{y}_s\right\|^2_2\right\} \leq \left\|\boldsymbol{y}^{(k-1)}_s-\boldsymbol{y}_s\right\|^2_2-\alpha^2\delta^2_k\left\|\mbX^{(k)}-\mbX\right\|^2_{\mathrm{F}}.
\end{equation}
\end{theorem}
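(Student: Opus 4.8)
The plan is to run the OB-SVT-II argument of Theorem~\ref{ST_theo2} on the \emph{sketched} operator $\mathcal{M}(\mbX)=\mbS\boldsymbol{\mathcal{B}}\operatorname{vec}(\mbX)$ with right-hand side $\mbS\mbt$, producing a recursion that holds conditionally on the draw of $\mbS$, and then to take the expectation $\mathbb{E}_{\mbS}\{\cdot\}$ and resolve the resulting random contraction constant through a concentration bound for the sparse Gaussian sketch \eqref{MMYST}. First I would set $\mbe_k=\boldsymbol{y}^{(k)}_s-\boldsymbol{y}_s$ and $P(\mbX^{(k)})=\boldsymbol{y}^{(k-1)}_s+\delta_k(\mbS\mbt-\mathcal{M}(\mbX^{(k)}))^{+}$, with the reference pair $(\mbX,\boldsymbol{y}_s)$ chosen as a fixed point of the randomized update. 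Expanding $\|\mbe_k\|^2_2$ and using the non-expansiveness of $(\cdot)^{+}$ exactly as in Appendix~\ref{Stephania_3} yields a cross term and a quadratic term in $(\mbS\mbt-\mathcal{M}(\mbX^{(k)}))^{+}$; the same monotonicity observation used there (a nonnegative increment is added to $\boldsymbol{y}^{(k-1)}_s$ at each step) absorbs the cross term and leaves the conditional recursion
\begin{equation}
\label{plan_cond}
\left\|\mbe_k\right\|^2_2\leq\left\|\mbe_{k-1}\right\|^2_2-\delta^2_k\left\|\left(\mbS\mbt-\mathcal{M}\left(\mbX^{(k)}\right)\right)^{+}\right\|^2_2.
\end{equation}

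Taking $\mathbb{E}_{\mbS}\{\cdot\}$ of \eqref{plan_cond} then reduces the theorem to the lower bound $\mathbb{E}_{\mbS}\{\|(\mbS\mbt-\mathcal{M}(\mbX^{(k)}))^{+}\|^2_2\}\geq\alpha^2\|\mbX^{(k)}-\mbX\|^2_{\mathrm{F}}$. To obtain it I would first apply the Hoffman inequality \cite[Theorem~4.2]{leventhal2010randomized,hoffman2003approximate} to the sketched feasibility system $\mbS\boldsymbol{\mathcal{B}}\operatorname{vec}(\mbX)\succeq\mbS\mbt$, giving $\sigma^2_{\mathrm{min}}(\mbS\boldsymbol{\mathcal{B}})\|\mbX^{(k)}-\mbX\|^2_{\mathrm{F}}\leq\|(\mbS\mbt-\mathcal{M}(\mbX^{(k)}))^{+}\|^2_2$. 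I would then simplify $\sigma_{\mathrm{min}}(\mbS\boldsymbol{\mathcal{B}})$ using the block structure of \eqref{MMYST}: since $\mbS$ isolates a single dither block, the identity $\boldsymbol{\mathcal{B}}^{\top}\boldsymbol{\mathcal{B}}=m\mbP^{\top}\mbP$ established in Appendix~\ref{Stephania_2} together with $\mbP\mbP^{\top}=\mbI$ collapses $(\mbS\boldsymbol{\mathcal{B}})(\mbS\boldsymbol{\mathcal{B}})^{\top}$ to the Gaussian Wishart matrix $\mbG\mbG^{\top}$, so that the spectrum of $\mbS\boldsymbol{\mathcal{B}}$ is governed entirely by that of the $s\times m^{\prime}$ block $\mbG$ with i.i.d.\ $\mathcal{N}(0,1)$ entries.

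The remaining ingredient is the concentration of the extreme singular values of this sparse Gaussian sketch: invoking the sharp sketch-and-project analysis of \cite{derezinski2022sharp} with the stated normalization $\alpha=c\sqrt{\log s}$, the relevant singular value concentrates so that the expected contraction constant is of order $\alpha^2$, and substituting this into the expectation of \eqref{plan_cond} produces the claimed bound \eqref{MySt_forever2}. The main obstacle I anticipate is precisely this probabilistic step. Unlike the deterministic cases where $\sigma^2_{\mathrm{min}}(\boldsymbol{\mathcal{B}})=m$ is exact, here $\mbX^{(k)}=\mathcal{D}_\tau(\mathcal{M}^{\star}(\boldsymbol{y}^{(k-1)}_s))$ is itself a function of the sketch, so the Hoffman constant and the iterate are statistically coupled and the expectation cannot be factored naively; making the argument rigorous requires conditioning on the sketches of the preceding iterations (or a uniform high-probability singular-value bound over the set $\mathcal{K}_r$ in \eqref{n_2}), together with a careful verification that the block-Gaussian structure of \eqref{MMYST}, rather than a dense Gaussian, still delivers the $\sqrt{\log s}$ scaling of $\alpha$. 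A secondary subtlety is that a Gaussian sketch does not preserve the sign of the feasibility residual, so the identification of $\mbX$ as a point of the sketched feasible set inside the Hoffman bound must be justified through the consistent-reconstruction assumption of Definition~\ref{def_2} rather than taken for granted.
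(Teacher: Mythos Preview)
Your first paragraph, culminating in the conditional recursion \eqref{plan_cond}, matches the paper exactly. The divergence is in how the contraction constant $\alpha^2=c^2\log s$ is extracted, and here your route does not work as stated.

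You propose to apply Hoffman's bound to the \emph{sketched} system $\mbS\boldsymbol{\mathcal{B}}\operatorname{vec}(\mbX)\succeq\mbS\mbt$ and then control $\sigma^2_{\min}(\mbS\boldsymbol{\mathcal{B}})$ via singular-value concentration for the Gaussian block $\mbG$. But the extreme singular values of an $s\times m'$ i.i.d.\ standard Gaussian matrix concentrate around $\sqrt{m'}\pm\sqrt{s}$, not around $\sqrt{\log s}$; no normalization of $\mbG$ turns $\sigma_{\min}(\mbG)$ into a $\sqrt{\log s}$ quantity. The hand-off to \cite{derezinski2022sharp} does not supply the missing mechanism. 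Moreover, as you yourself flag, a Gaussian sketch destroys the sign structure of the residual, so $\mbX$ need not be feasible for the sketched system and the Hoffman step on $\mbS\boldsymbol{\mathcal{B}}$ is not justified to begin with.

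The paper sidesteps both issues by reversing the order of the two ingredients. After \eqref{plan_cond} it first weakens $\|\cdot\|_2$ to $\|\cdot\|_\infty$, takes $\mathbb{E}_{\mbS}$, and applies Jensen to obtain $\bigl(\mathbb{E}_{\mbS}\|(\mbS\mbt-\mathcal{M}(\mbX^{(k)}))^{+}\|_\infty\bigr)^2$. Because each row of $\mbS$ is a standard Gaussian acting on the fixed block residual $\Tilde{\mbr}=\Tilde{\mbt}-\Tilde{\boldsymbol{\mathcal{B}}}\operatorname{vec}(\mbX^{(k)})$, this expected maximum of $s$ Gaussians is lower-bounded by $c\sqrt{\log s}\,\|\Tilde{\mbr}^{+}\|_2$ via the standard estimate in \cite[Section~2.5.2]{vershynin2018high}; this is where the $\sqrt{\log s}$ actually comes from. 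Only \emph{then} is Hoffman's inequality invoked, and on the deterministic single-dither block $\Tilde{\boldsymbol{\mathcal{B}}}$ (for which $\mbX$ is genuinely feasible and $\sigma^2_{\min}(\Tilde{\boldsymbol{\mathcal{B}}})=1$ by the same computation as in Appendix~\ref{Stephania_2}), not on the random sketched operator. In short: the paper uses the Gaussian rows to produce $\sqrt{\log s}$ through an expected-maximum argument and applies Hoffman to the unsketched block; your plan applies Hoffman to the sketched operator and hopes the singular values deliver $\sqrt{\log s}$, which they do not.
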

The complete proof of Theorem~\ref{ST_theo3} is presented in Appendix~\ref{Stephania_4}.
The connection between the convergence of the Lagrangian multiplier $\boldsymbol{y}^{(k)}$ and the convergence of the solution $\mbX^{(k)}$ remains consistent across all suggested algorithms, as expressed in the following lemma:
\begin{lemma}
\label{St_lemma}
Given that $m$ represents the number of random dither sequences, the solution attained by OB-SVT algorithms converges towards the optimum, as described by the following relationship: 
\begin{equation}
\label{ST-AR}
\left\|\mbX^{(k)}-\mbX\right\|^2_{\mathrm{F}} \leq \frac{1}{m}  \left\|\boldsymbol{y}^{(k-1)}-\boldsymbol{y}\right\|^2_{2},
\end{equation}
where the Lagrangian multipliers converge according to Theorems~\ref{ST_theo1}-\ref{ST_theo3}.
\end{lemma}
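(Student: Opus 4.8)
The plan is to obtain the bound as a corollary of the machinery already developed for Theorems~\ref{ST_theo1}--\ref{ST_theo3}, so that a single argument covers all three OB-SVT variants. The common starting point is that both the iterate and the limit are fixed points of the same shrinkage map, $\mbX^{(k)}=\mathcal{D}_\tau\!\left(\mathcal{A}^{\star}\!\left(\boldsymbol{y}^{(k-1)}\right)\right)$ and $\mbX=\mathcal{D}_\tau\!\left(\mathcal{A}^{\star}\!\left(\boldsymbol{y}\right)\right)$ for an optimal multiplier $\boldsymbol{y}$. The factor $\tfrac{1}{m}$ should come entirely from the spectral identity $\boldsymbol{\mathcal{B}}^{\top}\boldsymbol{\mathcal{B}}=m\,\mbP^{\top}\mbP$ proved for Theorem~\ref{ST_theo1}, which forces $\sigma^2_{\mathrm{min}}\!\left(\boldsymbol{\mathcal{B}}\right)=\sigma^2_{\mathrm{max}}\!\left(\boldsymbol{\mathcal{B}}\right)=m$; inverting $\sigma^2_{\mathrm{min}}$ is what produces the $\tfrac{1}{m}$.

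First I would bound the primal error by the feasibility residual through the Hoffman inequality \cite[Theorem~4.2]{leventhal2010randomized} applied to the one-bit polyhedron $\left\{\mbX\mid\mathcal{A}(\mbX)\succeq\mbt\right\}$, exactly as in the proofs of Theorems~\ref{ST_theo2}--\ref{ST_theo3}, giving
\begin{equation}
m\left\|\mbX^{(k)}-\mbX\right\|^2_{\mathrm{F}}\leq\left\|\left(\mbt-\mathcal{A}\!\left(\mbX^{(k)}\right)\right)^{+}\right\|^2_{2}.
\end{equation}
Second I would bound the residual by the multiplier gap. Using the Uzawa recursion together with the monotonicity relation used in those proofs---that the increment added to $\boldsymbol{y}^{(k-1)}$ does not overshoot $\boldsymbol{y}$, i.e. $\boldsymbol{y}^{(k-1)}+\delta_k\!\left(\mbt-\mathcal{A}(\mbX^{(k)})\right)\preceq\boldsymbol{y}$---one takes positive parts to obtain $\delta_k\!\left(\mbt-\mathcal{A}(\mbX^{(k)})\right)^{+}\preceq\left(\boldsymbol{y}-\boldsymbol{y}^{(k-1)}\right)^{+}$, and hence $\left\|\left(\mbt-\mathcal{A}(\mbX^{(k)})\right)^{+}\right\|_{2}\leq\left\|\boldsymbol{y}^{(k-1)}-\boldsymbol{y}\right\|_{2}$. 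Substituting into the Hoffman bound yields the claim, and the identical chain applies to the randomized algorithm after replacing $\boldsymbol{\mathcal{B}}$ by $\mbS\boldsymbol{\mathcal{B}}$ and using the corresponding minimal singular value.

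The step I expect to be the main obstacle is the passage from the residual to the multiplier gap, because the clean constant is sensitive to the step size. The positive-part domination above is exact only when $\delta_k\geq 1$, whereas the convergence guarantees of Theorems~\ref{ST_theo1}--\ref{ST_theo3} require $\delta_k<2/m$; for $m\geq 2$ these ranges are in tension, so one must either absorb a $\delta_k^{-2}$ factor or fix a normalization of the step size. A crude alternative---invoking only the non-expansiveness of $\mathcal{D}_\tau$ together with the operator norm $\sigma_{\mathrm{max}}\!\left(\boldsymbol{\mathcal{B}}\right)=\sqrt{m}$---runs the inequality in the opposite direction, delivering $\left\|\mbX^{(k)}-\mbX\right\|_{\mathrm{F}}\leq\sqrt{m}\left\|\boldsymbol{y}^{(k-1)}-\boldsymbol{y}\right\|_{2}$. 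Reconciling these two directions, and pinning down the precise normalization under which the stated constant $\tfrac{1}{m}$ holds, is where the argument must be made carefully.
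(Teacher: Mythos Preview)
Your Hoffman-based route is not what the paper does, and the obstacle you yourself flag is fatal for obtaining the clean constant: the passage from the feasibility residual to the multiplier gap unavoidably carries a $\delta_k^{-2}$, so you would end with $\left\|\mbX^{(k)}-\mbX\right\|^2_{\mathrm{F}}\leq\frac{1}{m\,\delta_k^{2}}\left\|\boldsymbol{y}^{(k-1)}-\boldsymbol{y}\right\|^2_{2}$, which collapses to the lemma only when $\delta_k=1$. That is precisely the tension with the convergence requirement $\delta_k<2/m$ that you noticed, and there is no way to remove it along this line.

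The paper instead takes exactly what you call the ``crude alternative'': it starts from $\mbX^{(k)}=\mathcal{D}_\tau\!\left(\mathcal{A}^{\star}(\boldsymbol{y}^{(k-1)})\right)$ and $\mbX=\mathcal{D}_\tau\!\left(\mathcal{A}^{\star}(\boldsymbol{y})\right)$, invokes the $1$-Lipschitz property of the shrinkage operator $\mathcal{D}_\tau$ (citing \cite{andersson2016operator}) to get $\left\|\mbX^{(k)}-\mbX\right\|^2_{\mathrm{F}}\leq\left\|\mathcal{A}^{\star}(\boldsymbol{y}^{(k-1)})-\mathcal{A}^{\star}(\boldsymbol{y})\right\|^2_{\mathrm{F}}$, and then bounds the right-hand side by $\sigma^2_{\mathrm{max}}(\mathcal{A}^{\star})\left\|\boldsymbol{y}^{(k-1)}-\boldsymbol{y}\right\|^2_{2}$. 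The point you missed is that the paper does \emph{not} identify $\mathcal{A}^{\star}$ with $\boldsymbol{\mathcal{B}}^{\top}$; it sets $\sigma^2_{\mathrm{max}}(\mathcal{A}^{\star})=\sigma^2_{\mathrm{max}}(\boldsymbol{\mathcal{B}}^{\dagger})$, the Moore--Penrose pseudoinverse. From $\boldsymbol{\mathcal{B}}^{\top}\boldsymbol{\mathcal{B}}=m\,\mbP^{\top}\mbP$ one computes $\boldsymbol{\mathcal{B}}^{\dagger}=\tfrac{1}{m}\boldsymbol{\mathcal{B}}^{\top}$, whence $\sigma^2_{\mathrm{max}}(\boldsymbol{\mathcal{B}}^{\dagger})=\tfrac{1}{m^{2}}\sigma^2_{\mathrm{max}}(\boldsymbol{\mathcal{B}})=\tfrac{1}{m}$. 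That single identification is what flips your $m$ into $\tfrac{1}{m}$ and delivers the stated bound without any dependence on $\delta_k$ or any appeal to Hoffman's inequality. So the ``reconciliation'' you were searching for is not an additional estimate but a different choice of which linear map plays the role of $\mathcal{A}^{\star}$ in the primal update.
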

We extensively provide the proof of Lemma~\ref{St_lemma} in Appendix~\ref{Stephania_5}. In the context of randomized OB-SVT, in order to align with the algorithm's randomized nature and establish a diminishing upper bound for the error, we take the expectation from both sides, yielding: $\mathbb{E}\left\{\left\|\mbX^{(k)}-\mbX\right\|^2_{\mathrm{F}}\right\} \leq \frac{1}{m}  \mathbb{E}\left\{\left\|\boldsymbol{y}^{(k-1)}_s-\boldsymbol{y}_s\right\|^2_{2}\right\}$.
The mentioned lemma reaffirms that increasing the quantity of random dither sequences accelerates the convergence rate of OB-SVT algorithms.
\subsection{Robustness of Proposed Algorithms in the Presence of Noise}
In this study, we examine the impact of noise or perturbations on the performance of proposed OB-SVT algorithms. We apply these algorithms to solve the problem of noisy one-bit MC and explore the condition under which all algorithms still converge and provide satisfactory recovery performance. This investigation aims to demonstrate the effectiveness and robustness of the proposed algorithms, even in the presence of noise at specific levels. 


In the context of the noisy one-bit MC problem as defined in Section~\ref{Noise}, the proposition below articulates the sufficient condition on the \emph{pre-quantization error} $\mbZ\in\mathbb{R}^{n_1\times n_2}$ to guarantee the convergence of the proposed algorithms:

\begin{proposition}
\label{prop_1}
Let $m$ denote the number of random dither sequences, and $\delta_k$ represent the step size.
For the noisy one-bit MC problem with $\mbZ\in\mathbb{R}^{n_1\times n_2}$ representing the pre-quantization error and $I$ signifying the total count of iterations at which the algorithm achieves complete convergence, the convergence of proposed algorithms is contingent upon the noise satisfying the following condition:
\begin{equation}
\label{Stephanie_1}
\left\|\mbZ\right\|^2_{\mathrm{F}}\leq \frac{1}{2}\left\|\mbX^{(I)}-\mbX\right\|^2_{\mathrm{F}}.
\end{equation}
\end{proposition}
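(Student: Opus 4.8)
The plan is to show that running any of the OB-SVT recursions (Theorems~\ref{ST_theo1}--\ref{ST_theo3}) on the noisy data reproduces exactly the monotone error decrease of the noiseless case, up to one extra term controlled by the pre-quantization error $\mbZ$, and then to impose that this extra term never reverses the decrease. First I would translate $\mbZ$ into the feasibility system: from \eqref{St_9}--\eqref{St_10} the true matrix satisfies $\mathcal{A}(\mbX)+\boldsymbol{\upnu}\succeq\mbt$ with $\boldsymbol{\upnu}=\boldsymbol{\mathcal{B}}\operatorname{vec}(\mbZ)=\mathcal{A}(\mbZ)$, so $\mbX$ is in general \emph{infeasible} for the observed polyhedron $\{\mbX'\mid\mathcal{A}(\mbX')\succeq\mbt\}$, and its residual obeys $\|(\mbt-\mathcal{A}(\mbX))^{+}\|_2\le\|\boldsymbol{\upnu}\|_2$. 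Using the identity $\sigma^2_{\mathrm{max}}(\boldsymbol{\mathcal{B}})=\sigma^2_{\mathrm{min}}(\boldsymbol{\mathcal{B}})=m$ underlying Theorems~\ref{ST_theo1}--\ref{ST_theo3}, this residual is bounded by $\|\boldsymbol{\upnu}\|_2^2\le m\|\mbZ\|_{\mathrm{F}}^2$, which is the quantity that will surface as the noise contribution.

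Next I would re-run the error recursion for $\mbe_k=\boldsymbol{y}^{(k)}-\boldsymbol{y}$ exactly as in the proofs of Theorems~\ref{ST_theo1}--\ref{ST_theo3}, but comparing $\mbX^{(k)}$ against the shifted feasible system whose right-hand side is $\mbt-\boldsymbol{\upnu}$, for which $\mbX$ \emph{is} feasible. The non-expansiveness of $(\cdot)^{+}$ together with the monotone growth of $\boldsymbol{y}^{(k)}$ reproduces the cross-term bound, and the Hoffman inequality \cite[Theorem~4.2]{leventhal2010randomized, hoffman2003approximate} with $\sigma^2_{\mathrm{min}}(\boldsymbol{\mathcal{B}})=m$ converts the residual into $\|\mbX^{(k)}-\mbX\|_{\mathrm{F}}^2$. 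The only change relative to the noiseless argument is that swapping $\mbt$ for $\mbt-\boldsymbol{\upnu}$ injects the term $\|\boldsymbol{\upnu}\|_2^2\le m\|\mbZ\|_{\mathrm{F}}^2$, so I expect the recursion to take the form $\|\mbe_k\|_2^2\le\|\mbe_{k-1}\|_2^2-m\delta_k^2\big(\|\mbX^{(k)}-\mbX\|_{\mathrm{F}}^2-2\|\mbZ\|_{\mathrm{F}}^2\big)$, the factor $2$ emerging from the cross term produced when the noisy residual is split into its noiseless part and the perturbation $\boldsymbol{\upnu}$.

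Finally I would enforce convergence, namely that $\|\mbe_k\|_2^2$ stays non-increasing, which requires the parenthesized factor to remain nonnegative. Evaluating this at the convergence index $I$—where $\|\mbX^{(I)}-\mbX\|_{\mathrm{F}}$ is the smallest iterate error and hence the binding case—gives precisely $\|\mbZ\|_{\mathrm{F}}^2\le\tfrac12\|\mbX^{(I)}-\mbX\|_{\mathrm{F}}^2$, and Lemma~\ref{St_lemma} then transfers this non-increasing behaviour from the multiplier error to the iterate error. Because all three variants share the same recursion skeleton—only the per-step constant ($2\delta_k-m\delta_k^2$, $m\delta_k^2$, or $\alpha^2\delta_k^2$) differs, and each reduces through $\sigma^2_{\mathrm{min}}(\boldsymbol{\mathcal{B}})=m$ to the identical dependence on $\|\mbX^{(k)}-\mbX\|_{\mathrm{F}}^2$—the single condition \eqref{Stephanie_1} covers all of them.

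The main obstacle I anticipate is the bookkeeping of the noise term through the non-expansive map $(\cdot)^{+}$ and the Hoffman step: one must compare the true $\mbX$ against the shifted system $\mbt-\boldsymbol{\upnu}$ rather than the observed $\mbt$, and it is the cross term between the noiseless residual and $\boldsymbol{\upnu}$ that upgrades the bare $\|\mbZ\|_{\mathrm{F}}^2$ into the $2\|\mbZ\|_{\mathrm{F}}^2$ responsible for the $\tfrac12$ in \eqref{Stephanie_1}. Verifying that this factor is not lost to a cruder bound, so that the stated threshold is genuinely the one at which the per-iteration decrease turns nonnegative, is the delicate point.
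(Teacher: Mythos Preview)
Your approach is correct and reaches the same condition, but the paper takes a more direct route that sidesteps the very obstacle you flag. Rather than shifting the constraint right-hand side to $\mbt-\boldsymbol{\upnu}$ and tracking the perturbation $\boldsymbol{\upnu}=\mathcal{A}(\mbZ)$ through the $(\cdot)^{+}$ and Hoffman steps, the paper simply shifts the \emph{reference point}: since $\mbX+\mbZ$ is feasible for the observed system $\mathcal{A}(\cdot)\succeq\mbt$, the noiseless recursions of Theorems~\ref{ST_theo1}--\ref{ST_theo2} apply verbatim with $\mbX$ replaced by $\mbX+\mbZ$, yielding immediately $\|\mbe_k\|_2^2\le\|\mbe_{k-1}\|_2^2-c_k\|\mbX^{(k)}-\mbX-\mbZ\|_{\mathrm{F}}^2$ (with $c_k$ the per-variant constant). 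The factor $\tfrac12$ then appears in a single elementary matrix-level step, the inequality $\|\mbX^{(k)}-\mbX-\mbZ\|_{\mathrm{F}}^2\ge\tfrac12\|\mbX^{(k)}-\mbX\|_{\mathrm{F}}^2-\|\mbZ\|_{\mathrm{F}}^2$, rather than emerging from a measurement-level cross term that has to be passed back through $\sigma^2(\boldsymbol{\mathcal{B}})=m$. The paper's route therefore involves no ``delicate bookkeeping'' at all: the non-expansive map and Hoffman inequality are applied exactly as in the noiseless case, and the noise enters only at the very last line. Your invocation of Lemma~\ref{St_lemma} is also unnecessary here; the paper derives the condition directly from the multiplier recursion without it.
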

A comprehensive demonstration of the proof for Proposition~\ref{prop_1} can be found in Appendix~\ref{ghadboland_1}. Irrespective of whether $\mbZ$ is deterministic or random, it is imperative for its Frobenius norm to adhere to the previously stated upper bound to ensure reliable convergence. Moreover, drawing insights from the aforementioned proposition, it becomes evident that in contrast to noise-free situations, the recovery error of OB-SVT algorithms in noisy scenarios is constrained by a lower bound contingent upon the impact of noise.

\section{Numerical Investigations}
\label{numerical}
In this section, we numerically scrutinize the efficacy of our proposed OB-SVT methods by comparing their recovery results with the state-of-the-art method (MLE approach) proposed in \cite{davenport20141}. Note that in all experiments the number of random dither sequences is considered to be $m=1$ except in Fig.~\ref{figure_1}f.
In particular, we constructed a random $500\times 500$ matrix $\mbX=[X_{i,j}]$ with rank $r=10$ by forming $\mbX=\mbX_{1}\mbX_{2}^{\top}$, where $\mbX_{1}$ and $\mbX_{2}$ are $500\times 10$ matrices with entries drawn i.i.d. from the Gaussian distribution $\mathcal{N}\left(0,1\right)$. In the results reported in Fig.~\ref{figure_1}a, the high-resolution measurements $\mathcal{P}_{\Omega}\left(\mbX\right)$ are corrupted by an additive noise $\mbZ=[z_{i,j}]$ following the distribution $z_{i,j}\sim\mathcal{N}(0,1)$ resulting in $\mathcal{P}_{\Omega}\left(\mbY\right)=\mathcal{P}_{\Omega}\left(\mbX+\mbZ\right)$. We then obtained one-bit observations by comparing the corrupted high-resolution values with the generated dithers $\boldsymbol{\uptau}$ and recording the sign of the resulting value. Herein, we have utilized the Gaussian 
distributed random dithers. Accordingly, inspired by \cite{10206479}, we have generated the Gaussian dithers as $\left\{\boldsymbol{\uptau}^{(\ell)}\sim\mathcal{N}\left(\mathbf{0},\frac{\beta_{\mbY}^{2}}{9}\mbI\right)\right\}_{\ell=1}^{m}$, where $\beta_{\mbY}$ denotes the 
dynamic range (DR) of $\mathcal{P}_{\Omega}\left(\mbY\right)$.
For simplicity in OB-SVT methods, we have utilized step sizes that are independent of the iteration count; i.e. $\delta_{k}=\delta$ for all $k$. Inspired from \cite{cai2010singular}, we have set the value of $\delta$ to $\delta=1.2\frac{n_{1}n_{2}}{m^{\prime}}$. The reason behind this selection has been heuristically justified in \cite{cai2010singular}. The parameter $\tau$ is chosen empirically and set to $\tau=\alpha\sqrt{n_{1}n_{2}}$, where the value of $\alpha$ varies for different values of $m^{\prime}$. Define the relative error of the reconstruction as
$\text{relative error}\triangleq\frac{\left\|\bar{\mbX}-\mbX\right\|_{\mathrm{F}}}{\left\|\mbX\right\|_{\mathrm{F}}}$,
where $\mbX$ is the real unknown matrix and $\bar\mbX$ denotes the reconstructed version of $\mbX$ by either OB-SVT or MLE methods. In Fig.~\ref{figure_1}a, the accuracy of input matrix reconstruction using OB-SVT-I, OB-SVT-II and MLE methods is compared in terms of relative error.
It can be seen that OB-SVT-I and OB-SVT-II outperform MLE in reconstructing the input matrix. The reason behind the better performance of OB-SVT-II compared to that of OB-SVT-I has been discussed in Sections~\ref{theory} and \ref{conv1}.
\begin{figure*}[t]
	\centering
	\subfloat[]
	{\includegraphics[width=0.3\columnwidth]{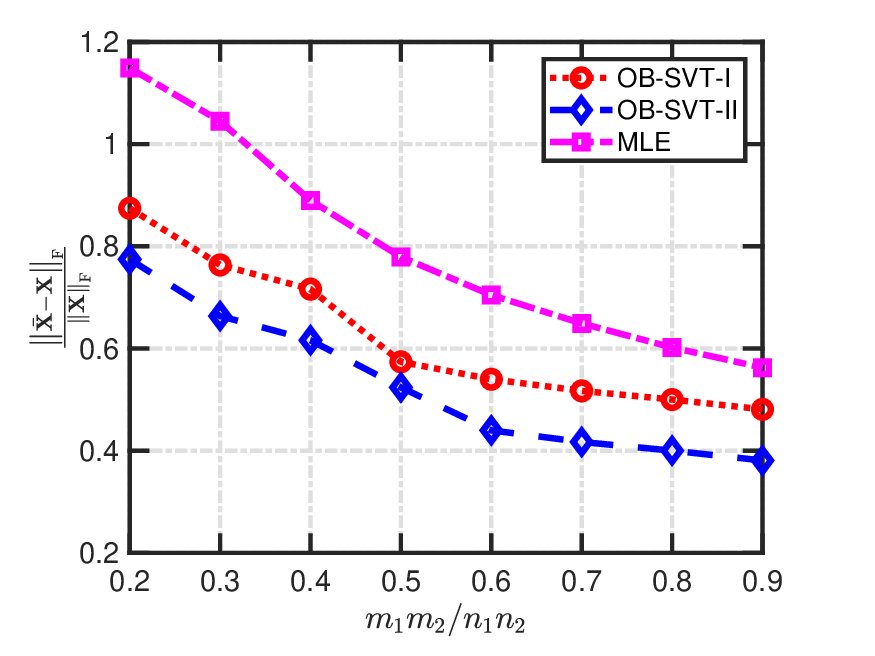}}\qquad
	\subfloat[]
	{\includegraphics[width=0.3\columnwidth]{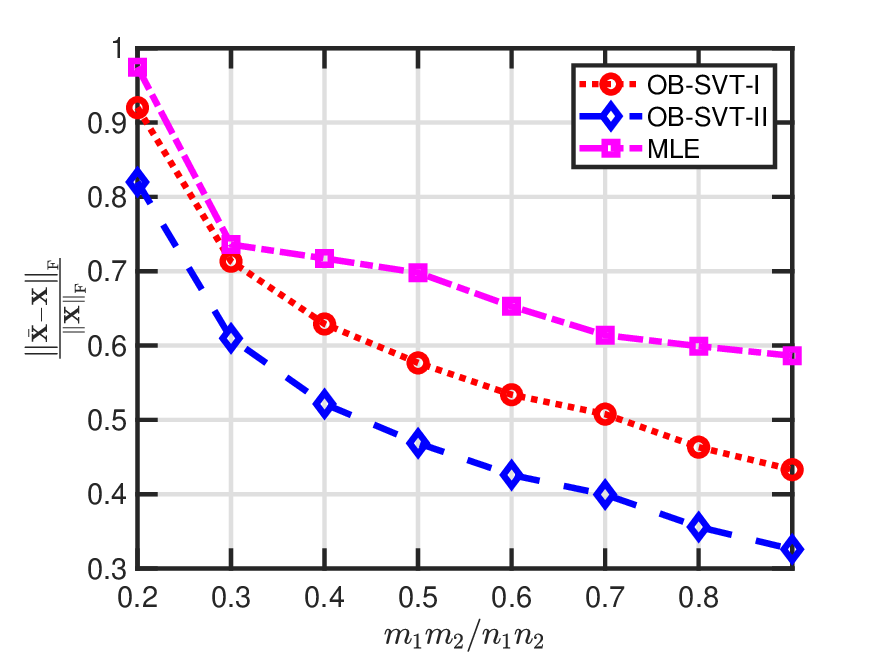}}\qquad
	\subfloat[]
	{\includegraphics[width=0.3\columnwidth]{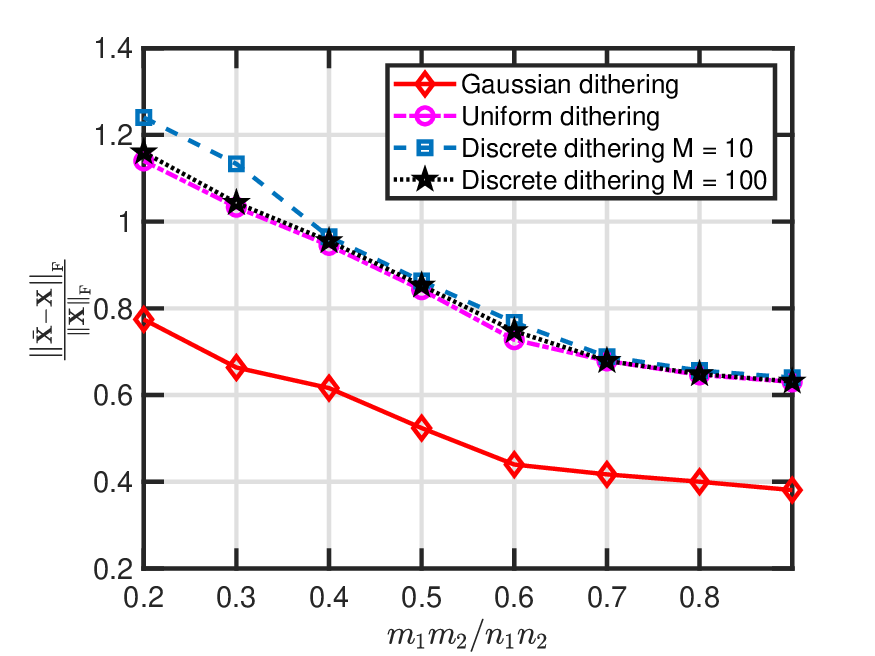}}\qquad\\
	\subfloat[]
	{\includegraphics[width=0.3\columnwidth]{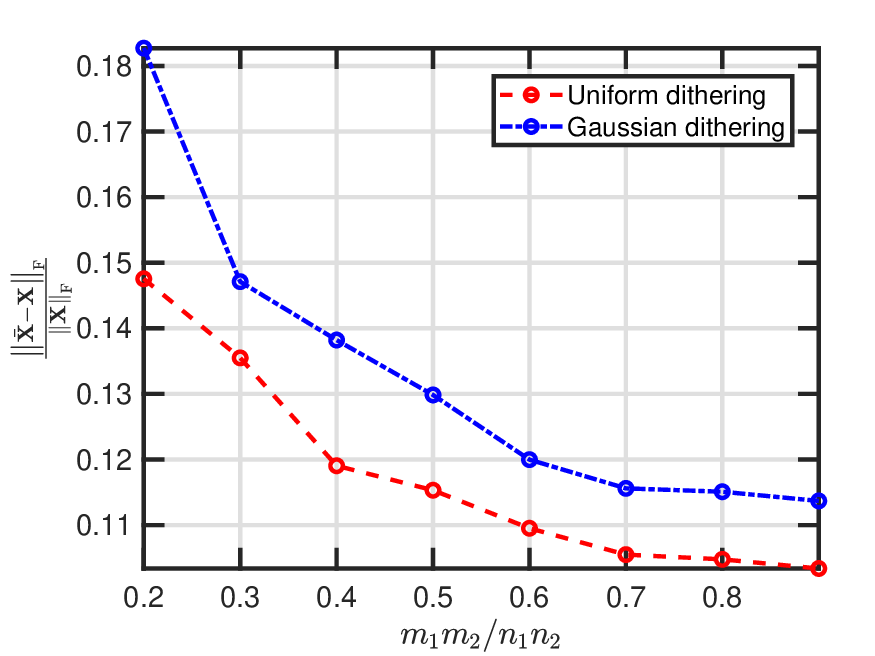}}\qquad
	\subfloat[]
	{\includegraphics[width=0.3\columnwidth]{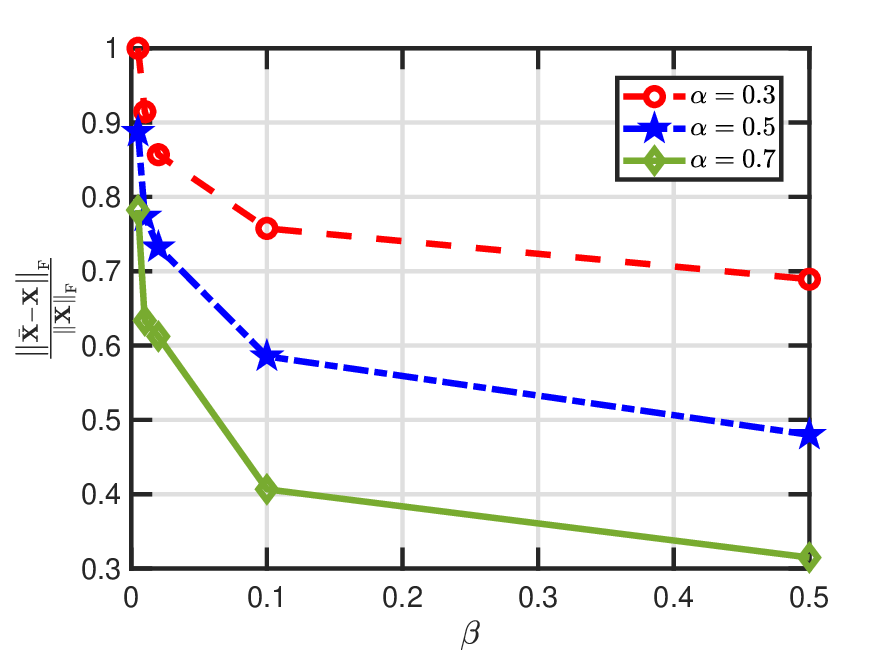}}\qquad
	\subfloat[]
	{\includegraphics[width=0.3\columnwidth]{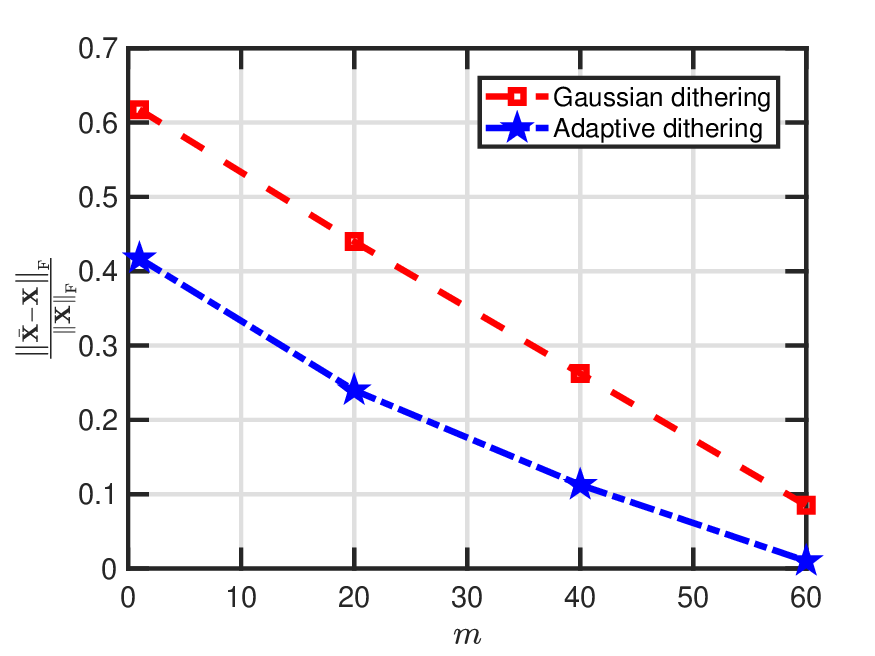}}
	\caption{(a) Comparison between the reconstruction performance of OB-SVT-I, OB-SVT-II and the MLE \cite{davenport20141} in terms of the relative error 
		when a $500\times 500$ matrix $\mbX$ with the rank $r=10$ is corrupted by Gaussian noise. (b) Comparison between the reconstruction performance of OB-SVT-I, OB-SVT-II and the MLE \cite{davenport20141} in terms of the relative error when the measurements are corrupted by Poisson noise. (c) Comparing different dithering schemes, Gaussian, uniform and discrete with quantization levels $M=10$ and $M=100$, for the one-bit MC with Gaussian measurements corrupted by Gaussian noise. (d) Comparison between uniform dithering and Gaussian dithering schemes for the input measurements generated based on uniform distribution. (e) The performance of randomized OB-SVT algorithm for different values of $\beta=s/m^{\prime}$ and $\alpha=m^{\prime}/n_1 n_2$. (f) Improvement of reconstruction accuracy through the OB-SVT-II method as the number of dithering sequences grows large, and comparison between the performance of Gaussian dithering and the proposed adaptive dithering scheme.
	}
	\vspace{-10pt}
	\label{figure_1}
\end{figure*}
In Fig.~\ref{figure_1}b, we have considered the same settings as in Fig.~\ref{figure_1}a with the exception that the noise matrix $\mbZ=[z_{i,j}]$ adheres to the Poisson distribution with the parameter $\lambda=0.5$. Similar to our previous observation, OB-SVT-I and OB-SVT-II show a better reconstruction results in terms of relative error compared to the MLE approach. With the same settings as in Fig.~\ref{figure_1}a, Fig.~\ref{figure_1}c compares the reconstruction performance of OB-SVT-II in terms of relative error, when Gaussian, Uniform, and discrete dithers with $M=10$ and $M=100$ are utilized. We have generated the uniform random dithers as $\left\{\boldsymbol{\uptau}^{(\ell)}\sim\mathcal{U}_{[-a,a]}\right\}_{\ell=1}^{m}$, where $a$ denotes the DR of $\mathcal{P}_{\Omega}\left(\mbY\right)$. The discrete dithers can be generated as explained in Section~\ref{OB}. As can be observed in Fig.~\ref{figure_1}c, in the case of Gaussian input matrix, OB-SVT-II with Gaussian dither shows a better reconstruction performance compared to that of OB-SVT-II with uniform and discrete dithers. 
Also, it can be observed that by increasing the value of $M$ in the discrete dither, the reconstruction performance of OB-SVT-II tends to the case of uniform dithering. The reason behind this is simply followed by the distribution of discrete dither as described in \eqref{Stephania_8}.
In the next experiment, we constructed a random $500\times 500$ matrix $\mbX=[X_{i,j}]$ with rank $r=10$ by forming $\mbX=\mbX_{1}\mbX_{2}^{\top}$, where $\mbX_{1}$ and $\mbX_{2}$ are $500\times 10$ matrices with entries drawn i.i.d. from the uniform distribution $\mathcal{U}_{[0,1]}$. Fig.~\ref{figure_1}d compares the reconstruction performance of OB-SVT-II in terms of relative error, when Gaussian and uniform dithering are utilized.
In this case, the reconstruction performance of OB-SVT-II with uniform dither outperforms the performance of OB-SVT-II with Gaussian dither. 
Note that the results reported in Fig.~\ref{figure_1}d are related to the noiseless scenario. Define the parameters $\alpha=\frac{m^{\prime}}{n_1n_2}$ and $\beta=\frac{s}{m_1m_2}$, where $s$ denotes the sketch length in randomized OB-SVT algorithm. By considering the same settings as in Fig.~\ref{figure_1}a (with no noise), Fig.~\ref{figure_1}e illustrates the reconstruction performance of randomized OB-SVT in terms of relative error respect to $\beta\in\{0.005,0.01,0.02,0.1,0.5\}$ while $\alpha\in\{0.3,0.5,0.7\}$. Based on Theorem~\ref{ST_theo3}, increasing the sketch length $s$ (or increasing the value of $\beta$ here) results in an accelerated convergence rate which is also numerically validated in Fig.~\ref{figure_1}e. With the same settings as in Fig.~\ref{figure_1}a, Fig.~\ref{figure_1}f compares the reconstruction performance
of OB-SVT-II in terms of relative error respect to the number of random dither sequences $m$ when Gaussian and adaptive dithering scheme are utilized.
As shown, increasing the number of random dithers leads to a better reconstruction accuracy. This observation validates the upper recovery bound obtained in Theorem~\ref{thr_1}.

Concerning the computational cost of our algorithms, a significant portion of the complexity arises from the SVD utilized in the algorithms. It is important to note that the MLE problem is addressed using projected gradient descent, wherein the SVT operator is applied to the gradient descent step of the MLE objective at each iteration. As discussed in \cite{davenport20141}, due to the absence of a closed-form operator for solving the MLE problem, the gradient descent algorithm involving the computation of gradients is required at each iteration, in addition to updating the Lagrangian multiplier.
In contrast, the OB-SVT algorithms incur lower computational costs since there is no need to optimize any MLE objective. Instead, we simply update the Lagrangian multiplier at each iteration, involving only a straightforward matrix-vector multiplication. 
We demonstrate the improvement in computational complexity in terms of CPU time in Table~\ref{table_1}. The CPU time for both the OB-SVT-II algorithm and the MLE solver is reported for matrices of rank $5$ with varying sizes. This comparison is performed using a stopping criterion of $\mathrm{NMSE}\leq 0.8$ for both algorithms. The results reveal a substantial reduction in CPU time for our proposed algorithm compared to MLE.
\begin{table}[t]
\caption{CPU time comparison (in seconds) between OB-SVT-II and the MLE methods for rank $5$ input matrices $\mbX\in\mathbb{R}^{n\times n}$.}
\centering
\begin{tabular}{  c || c | c | c   }
\hline
$n$ & $100$ & $300$ & $500$ \\[0.5 ex]
\hline
\textbf{OB-SVT-II} & $0.19$ & $0.30$ & $0.46$ \\
\hline
\textbf{MLE} & $0.70$ & $1.90$ & $4.60$ \\
\hline
\end{tabular}
\label{table_1}
\end{table}
\section{Discussion}
\label{sec:summ}
This study explores the impact of dithered one-bit sensing on matrix completion, framing the challenge as a nuclear norm minimization problem incorporating linear inequality feasibility constraints from one-bit samples. The adaptation of the SVT algorithm to accommodate these constraints, both in scenarios with and without noise, outperformed the regularized MLE method in terms of recovery performance. Incorporating the concept of dithered one-bit MC, we examined noise characteristics within the problem context, relating the distance between entries and dithering sequences to a form of noise. Furthermore, we established a relationship between the number of samples and the suppression of upper bound on average distances between dithers and measurements, ultimately leading to obtain signal reconstruction performance. Numerical discussions extended to discrete dithering and comparisons with uniform and Gaussian scenarios. This paper mainly addressed one-bit MC using nuclear norm minimization. Low-rank matrix factorization could offer an alternative avenue for further research. The convergence rate of algorithms based on low-rank matrix factorization remains an active research area \cite{chi2019nonconvex}. Our study showed promising outcomes with the sketch-and-project method using randomized sketching matrices and partial measurements. Further potential lies in investigating randomized SVD \cite{derezinski2022sharp,oh2017fast}.

\appendices

\section{Proof of Theorem~\ref{thr_1}}
\label{Stephania_1}
We begin the proof by presenting the following lemma:
\begin{lemma}
\label{lem_1}
In the settings of Definition~\ref{def_1}, we have
\begin{equation}
\label{a_6}
\operatorname{Pr}\left(\sup_{\mbX\in\mathcal{K}_r}\left|T_{\mathrm{ave}}(\mbX)-\frac{\alpha}{2}-\frac{\|\mbX\|_{\mathrm{F}}^2}{2\alpha n_1n_2}\right|\geq\epsilon\right)\leq 2e^{-\frac{\epsilon^2 mm^{\prime}}{4\alpha^2}},
\end{equation}
where $\epsilon$ is a positive value.
\end{lemma}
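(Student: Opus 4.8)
The plan is to recognize the quantity being controlled as the deviation of an empirical average from its mean, to establish pointwise (fixed-$\mbX$) concentration by a bounded-differences argument, and then to promote this to a uniform bound over $\mathcal{K}_r$ through a covering argument whose entropy is absorbed by the sample requirement \eqref{samplereq}.

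First I would compute the mean of a single summand of $T_{\mathrm{ave}}$. For a fixed entry $X_{i,j}$ with $|X_{i,j}|\leq\alpha$ and a threshold $\tau\sim\mathcal{U}_{[-\alpha,\alpha]}$, a direct integration gives
\begin{equation}
\mathbb{E}\left|X_{i,j}-\tau\right| = \frac{1}{2\alpha}\int_{-\alpha}^{\alpha}\left|X_{i,j}-\tau\right|\,\mathrm{d}\tau = \frac{\alpha}{2}+\frac{X_{i,j}^2}{2\alpha}.
\end{equation}
Averaging this over the $m$ independent dither sequences and then over the uniformly sampled locations $\Omega$ (using $\mathbb{E}_{\Omega}\frac{1}{m^{\prime}}\sum_{(i,j)\in\Omega}X_{i,j}^2=\frac{1}{n_1n_2}\|\mbX\|_{\mathrm{F}}^2$) shows that the centering term appearing in the lemma, $\frac{\alpha}{2}+\frac{\|\mbX\|_{\mathrm{F}}^2}{2\alpha n_1n_2}$, is exactly $\mathbb{E}\,T_{\mathrm{ave}}(\mbX)$. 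Thus the event in \eqref{a_6} is precisely a uniform deviation of $T_{\mathrm{ave}}(\mbX)$ from its expectation.

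Next comes the pointwise concentration. Each summand $\left|X_{i,j}-\tau_{i,j}^{(\ell)}\right|$ lies in $[0,2\alpha]$ since $|X_{i,j}|,|\tau_{i,j}^{(\ell)}|\leq\alpha$, and $T_{\mathrm{ave}}(\mbX)$ is the average of these $mm^{\prime}$ conditionally independent terms. Consequently, altering a single threshold changes $T_{\mathrm{ave}}$ by at most $2\alpha/(mm^{\prime})$, so Hoeffding's inequality (equivalently, McDiarmid's bounded-differences inequality) produces, for each fixed $\mbX$, a tail of the advertised exponential form $2\exp\!\left(-\epsilon^2 mm^{\prime}/(4\alpha^2)\right)$; the same mechanism absorbs the extra fluctuation stemming from the random choice of locations. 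To upgrade this to the supremum over the infinite set $\mathcal{K}_r$, I would invoke a covering argument: $T_{\mathrm{ave}}(\cdot)$ is Lipschitz in the Frobenius norm (its Lipschitz constant controlled by the operator norm of $\boldsymbol{\mathcal{B}}$, hence by $\sqrt m$), so it suffices to take a union bound over an $\eta$-net of $\mathcal{K}_r$ and then transfer to neighbouring matrices by Lipschitz continuity. The log-cardinality of such a net — the Kolmogorov entropy $\mathcal{H}(\mathcal{K}_r,\eta)$ of rank-$\leq r$, max-norm-bounded matrices — scales like $r(n_1+n_2)\log(\cdot)$, and the condition \eqref{samplereq}, namely $mm^{\prime}\gtrsim\sqrt r\max(n_1,n_2)$, is exactly what makes the per-point exponent dominate this entropy, so the net factor is swallowed and the clean bound survives.

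The main obstacle I anticipate is precisely this last, uniform step: correctly bounding the covering number of the low-rank, bounded-entry set $\mathcal{K}_r$ and pinning down the Lipschitz constant of $T_{\mathrm{ave}}$, then balancing the net resolution $\eta$ against the concentration exponent so that the covering entropy is absorbed under \eqref{samplereq} while only the stated constant $\tfrac14$ (a conservative version of the $\tfrac12$ that Hoeffding delivers pointwise) remains in the exponent. By contrast, the single-summand expectation and the pointwise tail bound are routine once the boundedness of $\left|X_{i,j}-\tau_{i,j}^{(\ell)}\right|$ is noted.
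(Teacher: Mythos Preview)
Your proposal is correct and follows essentially the same route as the paper: compute $\mathbb{E}|X_{i,j}-\tau|=\tfrac{\alpha}{2}+\tfrac{X_{i,j}^2}{2\alpha}$ for uniform dither, average over $\Omega$ to identify the centering term, apply Hoeffding via the range $[0,2\alpha]$ for the pointwise tail, and then pay a covering-number factor over $\mathcal{K}_r$ that the sample condition \eqref{samplereq} absorbs (dropping the pointwise exponent constant from $\tfrac12$ to $\tfrac14$). If anything, your treatment of the net-to-all transfer via Lipschitz continuity of $T_{\mathrm{ave}}$ is more explicit than the paper's, which simply multiplies by the covering number.
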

\begin{IEEEproof}
For simplicity of notation, denote $d_{i,j}^{(\ell)}=\left|X_{i,j}-\tau_{i,j}^{(\ell)}\right|$ by $d=\left|X-\tau\right|$. Then, we can write
\begin{equation}
\label{a_29}
\begin{aligned}
\mathbb{E}_{\tau}\left\{d\right\}&=\frac{1}{2\alpha}\int_{-\alpha}^{\alpha}\left|X-\tau\right|\,d\tau\\&=\frac{1}{2\alpha}\left[\int_{-\alpha}^{X}X-\tau\,d\tau+\int_{X}^{\alpha}\tau-X\,d\tau\right]=\frac{\alpha}{2}+\frac{X^2}{2\alpha}.
\end{aligned}
\end{equation}
Therefore, we have 
\begin{equation}
\label{a_30}
\begin{aligned}
\mathbb{E}_{\tau}\left\{T_{\mathrm{ave}}(\mbX)\right\}&=\frac{1}{mm^{\prime}}\sum_{(i,j)\in\Omega}\sum_{\ell=1}^{m}\frac{\alpha}{2}+\frac{X_{i,j}^2}{2\alpha}\\&=\frac{\alpha}{2}+\frac{\|\mathcal{P}_{\Omega}\left(\mbX\right)\|_{\mathrm{F}}^2}{2\alpha m^{\prime}}.
\end{aligned}
\end{equation}
Computing the expected value of \eqref{a_30} respect to the randomness of $\Omega$ leads to 
\begin{equation}
\label{ghombol}
\mathbb{E}_{\tau,\Omega}\left\{T_{\mathrm{ave}}(\mbX)\right\}=\frac{\alpha}{2}+\frac{\|\mbX\|_{\mathrm{F}}^2}{2\alpha n_1n_2}.
\end{equation} 
In the following lemma, we present the Hoeffding's inequality for bounded random variables:
\begin{lemma}\cite[Theroem~2.2.5]{vershynin2018high}
\label{lemma_hoeffding} 
Let $\left\{X_i\right\}^{n}_{i=1}$ be independent, bounded random variables
satisfying $X_i \in [a_i, b_i]$, then for any $t > 0$ it holds that
\begin{equation}
\operatorname{Pr}\left(\left|\frac{1}{n} \sum_{i=1}^n\left(X_i-\mathbb{E}\left\{ X_i\right\}\right)\right| \geq t\right)\leq 2 e^{-\frac{2 n^2 t^2}{\sum_{i=1}^n\left(b_i-a_i\right)^2}}.  
\end{equation}
\end{lemma}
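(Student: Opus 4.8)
The plan is to establish this classical bound by the exponential moment (Chernoff) method, reducing first to a one-sided tail and then recovering the two-sided estimate by symmetry. First I would center the summands: setting $Y_i = X_i - \mathbb{E}\left\{X_i\right\}$, each $Y_i$ has zero mean and lies in an interval of width $b_i - a_i$. Writing $S = \sum_{i=1}^n Y_i$, the event $\left\{\frac{1}{n}\sum_{i=1}^n\left(X_i - \mathbb{E}\left\{X_i\right\}\right)\geq t\right\}$ is exactly $\left\{S \geq nt\right\}$, so it suffices to control $\operatorname{Pr}\left(S \geq nt\right)$ together with its mirror image $\operatorname{Pr}\left(S \leq -nt\right)$.

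Next, for any $\lambda > 0$, I would apply Markov's inequality to the nonnegative random variable $e^{\lambda S}$ and exploit independence to factorize the moment generating function:
\[
\operatorname{Pr}\left(S \geq nt\right) \leq e^{-\lambda n t}\, \mathbb{E}\left\{e^{\lambda S}\right\} = e^{-\lambda n t}\prod_{i=1}^n \mathbb{E}\left\{e^{\lambda Y_i}\right\}.
\]
The crux of the argument, and the step I expect to be the main obstacle, is bounding each factor $\mathbb{E}\left\{e^{\lambda Y_i}\right\}$. This is Hoeffding's lemma: for a zero-mean random variable $Y$ supported on $[u,v]$ one has $\mathbb{E}\left\{e^{\lambda Y}\right\} \leq e^{\lambda^2 (v-u)^2/8}$. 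I would prove it by convexity, bounding $e^{\lambda y} \leq \frac{v-y}{v-u}e^{\lambda u} + \frac{y-u}{v-u}e^{\lambda v}$ for $y \in [u,v]$, taking expectations while using $\mathbb{E}\left\{Y\right\}=0$, and then checking that the logarithm of the resulting convex combination, regarded as a function of $\lambda$, vanishes to first order at $\lambda = 0$ and has second derivative at most $(v-u)^2/4$; a second-order Taylor expansion about $\lambda = 0$ then yields the stated sub-Gaussian estimate. Applying this with $v - u = b_i - a_i$ gives
\[
\operatorname{Pr}\left(S \geq nt\right) \leq \exp\left(-\lambda n t + \frac{\lambda^2}{8}\sum_{i=1}^n (b_i - a_i)^2\right).
\]

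Finally, I would optimize the free parameter by minimizing the exponent over $\lambda > 0$; the quadratic is minimized at $\lambda^{\star} = 4nt/\sum_{i=1}^n(b_i-a_i)^2$, producing the one-sided bound $\exp\left(-2n^2 t^2 / \sum_{i=1}^n (b_i-a_i)^2\right)$. Repeating the identical argument with $-Y_i$ in place of $Y_i$ bounds the lower tail $\operatorname{Pr}\left(S \leq -nt\right)$ by the same quantity, and a union bound over the two tails supplies the factor of $2$, completing the proof of the stated inequality.
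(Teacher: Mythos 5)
Your proof is correct, and it coincides with the standard argument for this result: the paper does not prove the lemma itself but imports it from Vershynin's book, where the proof is exactly your Chernoff-method route --- centering, factorizing the moment generating function by independence, Hoeffding's lemma $\mathbb{E}\left\{e^{\lambda Y}\right\}\leq e^{\lambda^2(v-u)^2/8}$ via convexity and a second-order Taylor bound on the log-MGF, optimizing at $\lambda^{\star}=4nt/\sum_{i=1}^n(b_i-a_i)^2$, and a union bound over the two tails for the factor of $2$. All your computations check out, including the optimized exponent $-2n^2t^2/\sum_{i=1}^n(b_i-a_i)^2$.
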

Note that for each random variable $d_{i,j}^{(\ell)}$ we have $0\leq d_{i,j}^{(\ell)}\leq 2\alpha$. Then, following Lemma~\ref{lemma_hoeffding}, we can write
\begin{equation}
\label{a_31}
\operatorname{Pr}\left(\left|T_{\mathrm{ave}}(\mbX)-\frac{\alpha}{2}-\frac{\|\mbX\|_{\mathrm{F}}^2}{2\alpha n_1n_2}\right|\geq\epsilon\right)\leq 2e^{-\frac{\epsilon^2 mm^{\prime}}{2\alpha^2}}.
\end{equation}
As we consider the supremum over all $\mbX\in\mathcal{K}_r$, it is necessary to multiply the resulting probability by the covering number of the defined set. It is straightforward to verify that the covering number of $\rho$-balls required to cover the set $\mathcal{K}_{r}$ is upper bounded by
\begin{equation}
\mathcal{N}\left(\mathcal{K}_{r}, \left\|\cdot\right\|_{\mathrm{F}}, \rho\right)\leq \left(1+\frac{2\alpha\sqrt{n_1n_2}}{\rho}\right)^{(n_1+n_2)r},
\end{equation}
which can be further upper bounded by
\begin{equation}
\label{cov}
\begin{aligned}
\mathcal{N}\left(\mathcal{K}_{r}, \left\|\cdot\right\|_{\mathrm{F}},\rho\right)&\leq e^{(n_1+n_2)r\log\left(1+\frac{2\alpha\sqrt{n_1n_2}}{\rho}\right)}\\&\leq e^{\frac{2\alpha(n_1+n_2)r\sqrt{n_1n_2}}{\rho}}.
\end{aligned} 
\end{equation} 
Based on \eqref{cov}, the Kolmogorov $\rho$-entropy of the set $\mathcal{K}_r$ is upper bounded by
\begin{equation}
\label{cov23}
\begin{aligned}
\mathcal{H}\left(\mathcal{K}_{r},\rho\right)\leq \frac{2\alpha(n_1+n_2)r\sqrt{n_1n_2}}{\rho}.
\end{aligned} 
\end{equation}
To achieve the probability at least $1-2e^{-\frac{\epsilon^2 mm^{\prime}}{4\alpha^2}}$, it is sufficient to write
\begin{equation}
e^{\frac{2\alpha (n_1+n_2)r\sqrt{n_1n_2}}{\rho}}\leq e^{\frac{\epsilon^2 mm^{\prime}}{4\alpha^2}}, 
\end{equation}
or equivalently,  
\begin{equation}
mm^{\prime}\geq \frac{8\alpha^{3}(n_1+n_2)r\sqrt{n_1n_2}}{\epsilon^2\rho}, 
\end{equation}
which proves the lemma.
\end{IEEEproof}
Define $\mbZ=\frac{1}{2}(\mbX+\bar{\mbX})$. We can write $Z_{i,j}-\tau_{i,j}^{(\ell)}=\frac{1}{2}\left(X_{i,j}-\tau_{i,j}^{(\ell)}+\bar{X}_{i,j}-\tau_{i,j}^{(\ell)}\right)$. According to the consistent reconstruction property defined in Definition~\ref{def_2}, we can write
\begin{equation}
\label{a_33}
\left|Z_{i,j}-\tau_{i,j}^{(\ell)}\right|=\frac{1}{2}\left(\left|X_{i,j}-\tau_{i,j}^{(\ell)}\right|+\left|\bar{X}_{i,j}-\tau_{i,j}^{(\ell)}\right|\right).
\end{equation}
Based on \eqref{a_33}, for all $(i,j)\in\Omega$ and $\ell\in[m]$, we can write
\begin{equation}
\label{a_34}
T_{\mathrm{ave}}(\mbZ)=\frac{1}{2}\left[T_{\mathrm{ave}}(\mbX)+T_{\mathrm{ave}}(\bar{\mbX})\right].
\end{equation}
Following Lemma~\ref{lem_1}, with a failure probability at most $2e^{-\frac{\epsilon^2 mm^{\prime}}{4\alpha^2}}$,
we have
\begin{equation}
\label{a_9}
\frac{\|\mbZ\|_{\mathrm{F}}^2}{2\alpha n_1n_2}\geq T_{\mathrm{ave}}(\mbZ)-\frac{\alpha}{2}-\epsilon.
\end{equation}
Combining the results of \eqref{a_34} and \eqref{a_9} leads to
\begin{equation}
\label{a_10}
\begin{aligned}
\frac{\|\mbZ\|_{\mathrm{F}}^2}{2\alpha n_1n_2}&\geq\frac{1}{2}\left[T_{\mathrm{ave}}(\mbX)+T_{\mathrm{ave}}(\bar{\mbX})\right]-\frac{\alpha}{2}-\epsilon\\&\geq\frac{1}{2}\left[\frac{\|\mbX\|_{\mathrm{F}}^2}{2\alpha n_1n_2}+\frac{\alpha}{2}-\epsilon+\frac{\|\bar{\mbX}\|_{\mathrm{F}}^2}{2\alpha n_1n_2}+\frac{\alpha}{2}-\epsilon\right]-\frac{\alpha}{2}-\epsilon\\&=\frac{1}{4\alpha n_1n_2}\left[\|\mbX\|_{\mathrm{F}}^2+\|\bar{\mbX}\|_{\mathrm{F}}^2\right]-2\epsilon.
\end{aligned}
\end{equation}
Based on the definition of $\mbZ$, we can rewrite \eqref{a_10} in terms of $\mbX$ and $\bar{\mbX}$ as follows
\begin{equation}
\label{a_11}
\|\mbX+\bar{\mbX}\|_{\mathrm{F}}^2\geq2\left(\|\mbX\|_{\mathrm{F}}^2+\|\bar{\mbX}\|_{\mathrm{F}}^2\right)-16\alpha n_1n_2\epsilon.
\end{equation}
By the parallelogram law, we conclude that
\begin{equation}
\label{a_12}
\begin{aligned}
\|\mbX-\bar{\mbX}\|_{\mathrm{F}}^2&=2\left[\|\mbX\|_{\mathrm{F}}^2+\|\bar{\mbX}\|_{\mathrm{F}}^2\right]-\|\mbX+\bar{\mbX}\|_{\mathrm{F}}^2\\&\leq16\alpha n_1n_2\epsilon.
\end{aligned}
\end{equation}
Denote $\rho=4\sqrt{\epsilon\alpha n_1n_2}$. Then, according to Lemma~\ref{lem_1}, with a failure probability at most $2e^{-\frac{\epsilon^2 mm^{\prime}}{4\alpha^2}}$, we have $\|\mbX-\bar{\mbX}\|_{\mathrm{F}}\leq\rho$ and 
\begin{equation}
mm^{\prime}\gtrsim\sqrt{r}\max\left(n_1,n_2\right),
\end{equation}
which completes the proof.

\section{Proof of Theorem~\ref{thr_2}}
\label{proof_thr_2}
Define $\mbZ=\frac{1}{2}(\mbX+\bar{\mbX})$. We can write $Z_{i,j}-\tau_{i,j}^{(\ell)}=\frac{1}{2}\left(X_{i,j}-\tau_{i,j}^{(\ell)}+\bar{X}_{i,j}-\tau_{i,j}^{(\ell)}\right)$. Define the set $\mathcal{G}$ such that
\begin{equation}
\label{p_1}
\operatorname{sgn}\left(X_{i,j}-\tau_{i,j}^{(\ell)}\right)\neq\operatorname{sgn}\left(\bar{X}_{i,j}-\tau_{i,j}^{(\ell)}\right),\quad (i,j,\ell)\in\mathcal{G}.
\end{equation}
For all $(i,j,\ell)\in\Omega\times[m]\setminus\mathcal{G}$, we have 
\begin{equation}
\label{p_2}
\left|Z_{i,j}-\tau_{i,j}^{(\ell)}\right|=\frac{1}{2}\left(\left|X_{i,j}-\tau_{i,j}^{(\ell)}\right|+\left|\bar{X}_{i,j}-\tau_{i,j}^{(\ell)}\right|\right),
\end{equation} 
due to the consistent reconstruction property. It can be simply verified that for all $(i,j,\ell)\in\mathcal{G}$, we have 
\begin{equation}
\label{p_3}
\begin{aligned}
\left|Z_{i,j}-\tau_{i,j}^{(\ell)}\right|&=\frac{1}{2}\left(\left|X_{i,j}-\tau_{i,j}^{(\ell)}\right|+\left|\bar{X}_{i,j}-\tau_{i,j}^{(\ell)}\right|\right)\\&-\min\left(\left|X_{i,j}-\tau_{i,j}^{(\ell)}\right|,\left|\bar{X}_{i,j}-\tau_{i,j}^{(\ell)}\right|\right).
\end{aligned}
\end{equation} 
Taking average over all $(i,j)\in\Omega,\ell\in[m]$ leads to
\begin{equation}
\label{p_4}
T_{\mathrm{ave}}(\mbZ)=\frac{1}{2}\left[T_{\mathrm{ave}}(\mbX)+T_{\mathrm{ave}}(\bar{\mbX})\right]-R,
\end{equation} 
where 
\begin{equation}
\label{p_5}
R=\frac{1}{mm^{\prime}}\sum_{(i,j,\ell)\in\mathcal{G}}\min\left(\left|X_{i,j}-\tau_{i,j}^{(\ell)}\right|,\left|\bar{X}_{i,j}-\tau_{i,j}^{(\ell)}\right|\right).
\end{equation} 
Based on Theorem~\ref{thr_1}, if $mm^{\prime}\gtrsim\sqrt{r}\max\left(n_1,n_2\right)$ is met, then with a failure probability at most $2e^{-\frac{\epsilon^2 mm^{\prime}}{4\alpha^2}}$, we have
\begin{equation}
\label{p_6}
\frac{\|\mbZ\|_{\mathrm{F}}^2}{2\alpha n_1n_2}\geq T_{\mathrm{ave}}(\mbZ)-\frac{\alpha}{2}-\epsilon,
\end{equation}
which together with \eqref{p_4} results in
\begin{equation}
\label{p_7}
\begin{aligned}
\frac{\|\mbZ\|_{\mathrm{F}}^2}{2\alpha n_1n_2}&\geq\frac{1}{2}\left[T_{\mathrm{ave}}(\mbX)+T_{\mathrm{ave}}(\bar{\mbX})\right]-R-\frac{\alpha}{2}-\epsilon\\&\geq\frac{1}{2}\left[\frac{\|\mbX\|_{\mathrm{F}}^2}{2\alpha n_1n_2}+\frac{\|\bar{\mbX}\|_{\mathrm{F}}^2}{2\alpha n_1n_2}+\alpha-2\epsilon\right]-R-\frac{\alpha}{2}-\epsilon\\&=\frac{1}{4\alpha n_1n_2}\left[\|\mbX\|_{\mathrm{F}}^2+\|\bar{\mbX}\|_{\mathrm{F}}^2\right]-R-2\epsilon.
\end{aligned}
\end{equation}
Based on the definition of $\mbZ$ and by the parallelogram law, we can write
\begin{equation}
\label{p_8}
\|\mbX-\bar{\mbX}\|_{\mathrm{F}}^2\leq8\alpha n_1n_2R+16\alpha n_1n_2\epsilon.
\end{equation}
According to Theorem~\ref{thr_1}, with a failure probability at most $2e^{-\frac{\epsilon^2 mm^{\prime}}{4\alpha^2}}$, the value of $R$ is bounded as
\begin{equation}
\label{p_9}
R\leq\frac{|\mathcal{G}|}{mm^{\prime}}\alpha,
\end{equation}
which together with \eqref{p_8} completes the proof.
\section{Convergence Analysis of OB-SVT-I}
\label{Stephania_2}
Define $\mbe_k=\boldsymbol{y}^{(k)}-\boldsymbol{y}$, and 
\begin{equation}
\label{kvm_16}
\begin{aligned}
P\left(\mbX^{(k)}\right)&=\boldsymbol{y}^{(k-1)}+\delta_k \left(\mbt-\mathcal{A}\left(\mbX^{(k)}\right)\right),\\P\left(\mbX\right)&=\boldsymbol{y}+\delta_k \left(\mbt-\mathcal{A}\left(\mbX\right)\right).
\end{aligned}
\end{equation} 
Since $\boldsymbol{y}=\left(P\left(\mbX\right)\right)^{+}$ \cite{eaves1971basic}, the error norm is written as
\begin{equation}
\label{MySt_1}
\begin{aligned}
\left\|\mbe_k\right\|^2_2 & =\left\|\left(P\left(\mbX^{(k)}\right)\right)^{+}-\left(P\left(\mbX\right)\right)^{+}\right\|^2_2 \\
& \leq\left\|\boldsymbol{y}^{(k-1)}-\boldsymbol{y}+\delta_k\left(\mathcal{A}\left(\mbX\right)-\mathcal{A}\left(\mbX^{(k)}\right)\right)\right\|^2_2\\
&= \left\|\mbe_{k-1}\right\|^2_2+ 2\delta_k \mbe^{\top}_{k-1} \left(\mathcal{A}\left(\mbX\right)-\mathcal{A}\left(\mbX^{(k)}\right)\right)\\
&\quad\quad\quad\quad\quad+\delta^{2}_{k} \left\|\mathcal{A}\left(\mbX\right)-\mathcal{A}\left(\mbX^{(k)}\right)\right\|^2_2.
\end{aligned}
\end{equation}
As shown in \cite{cai2010singular}, we have
\begin{equation}
\label{ghadboland}
\mbe^{\top}_{k-1} \left(\mathcal{A}\left(\mbX\right)-\mathcal{A}\left(\mbX^{(k)}\right)\right)\leq -\left\|\mbX^{(k)}-\mbX\right\|^2_{\mathrm{F}},
\end{equation}
and 
\begin{equation}
\label{ghadboland1}
\begin{aligned}
\left\|\mathcal{A}\left(\mbX\right)-\mathcal{A}\left(\mbX^{(k)}\right)\right\|^2_{2}&=\left\|\boldsymbol{\mathcal{B}} \operatorname{vec}\left(\mbX\right)-\boldsymbol{\mathcal{B}} \operatorname{vec}\left(\mbX^{(k)}\right)\right\|^2_{2}\\&\leq \sigma^{2}_{\mathrm{max}}\left(\boldsymbol{\mathcal{B}}\right) \left\|\mbX^{(k)}-\mbX\right\|^2_{\mathrm{F}}.
\end{aligned}
\end{equation}
Combining \eqref{ghadboland} and \eqref{ghadboland1} with \eqref{MySt_1} results in
\begin{equation}
\label{MySt_2}
\begin{aligned}
\left\|\mbe_k\right\|^2_2 \leq \left\|\mbe_{k-1}\right\|^2_2- \left(2\delta_k-\delta^2_k \sigma^{2}_{\mathrm{max}}\left(\boldsymbol{\mathcal{B}}\right)\right) \left\|\mbX^{(k)}-\mbX\right\|^2_{\mathrm{F}}.
\end{aligned}
\end{equation}
Herein, we obtain the maximum singular value of one-bit MC matrix $\boldsymbol{\mathcal{B}}$ as follows: 
\begin{equation}
\label{proof_th}
\begin{aligned}
\mbW &=\boldsymbol{\mathcal{B}}^{\top}\boldsymbol{\mathcal{B}}\\&=  \left[\begin{array}{c|c|c}
\mbP^{\top}\bOmega^{(1)} &\cdots &\mbP^{\top}\bOmega^{(m)}
\end{array}\right]\left[\begin{array}{c|c|c}
\mbP^{\top}\bOmega^{(1)} &\cdots &\mbP^{\top}\bOmega^{(m)}
\end{array}\right]^{\top}\\
&=\mbP^{\top}\bOmega^{(1)}\bOmega^{(1)}\mbP+\cdots+\mbP^{\top}\bOmega^{(m)}\bOmega^{(m)}\mbP
= m\mbP^{\top}\mbI\mbP=m\mbP^{\top}\mbP,
\end{aligned}
\end{equation}
which means that the singular values of $\boldsymbol{\mathcal{B}}$ are $\left\{\sigma_{\boldsymbol{\mathcal{B}}}\right\}=\sqrt{m}\left\{\sigma_{i\mbP}\right\}$. Also, the singular values of the permutation matrix $\mbP$ are all equal to one. Therefore, the maximum (or minimum) singular value of one-bit MC matrix $\boldsymbol{\mathcal{B}}$ is given by $\sigma^{2}_{\mathrm{max}}\left(\boldsymbol{\mathcal{B}}\right)=m$,
which proves the theorem.
\section{Convergence Analysis of OB-SVT-II}
\label{Stephania_3}
Define $\mbe_k=\boldsymbol{y}^{(k)}-\boldsymbol{y}$. 
Then, the error norm is written as
\begin{equation}
\label{MySt_10}
\begin{aligned}
\left\|\mbe_k\right\|^2_2 
& =\left\|\boldsymbol{y}^{(k-1)}-\boldsymbol{y}+\delta_k\left(\mbt-\mathcal{A}\left(\mbX^{(k)}\right)\right)^{+}\right\|^2_2\\
&= \left\|\mbe_{k-1}\right\|^2_2+ 2\delta_k \mbe^{\top}_{k-1} \left(\mbt-\mathcal{A}\left(\mbX^{(k)}\right)\right)^{+}\\
&+\delta^{2}_{k} \left\|\left(\mbt-\mathcal{A}\left(\mbX^{(k)}\right)\right)^{+}\right\|^2_2.
\end{aligned}
\end{equation}
Since a nonnegative vector, $\delta_k\left(\mbt-\mathcal{A}\left(\mbX^{(k)}\right)\right)^{+}$, is added to $\boldsymbol{y}^{(k-1)}$ at each iteration, we can write
\begin{equation}
\label{ghadboland❤️}
\begin{aligned}
\boldsymbol{y}^{(k-1)}+\delta_k \left(\mbt-\mathcal{A}\left(\mbX^{(k)}\right)\right)^{+}\preceq \boldsymbol{y},
\end{aligned}
\end{equation}
which together with 
$\delta_k\left(\mbt-\mathcal{A}\left(\mbX^{(k)}\right)\right)^{+}\succeq\mathbf{0}$ leads to 
\begin{equation}
\label{ghadboland❤️❤️}
\delta_k \mbe^{\top}_{k-1} \left(\mbt-\mathcal{A}\left(\mbX^{(k)}\right)\right)^{+}\leq -\delta^2_k \left\|\left(\mbt-\mathcal{A}\left(\mbX^{(k)}\right)\right)^{+}\right\|^2_2.
\end{equation} 
Combining \eqref{ghadboland❤️❤️} with \eqref{MySt_10} leads to
\begin{equation}
\label{kvm_17}
\left\|\mbe_k\right\|^2_2\leq\left\|\mbe_{k-1}\right\|^2_2-\delta^2_k \left\|\left(\mbt-\mathcal{A}\left(\mbX^{(k)}\right)\right)^{+}\right\|^2_2.
\end{equation}
According to the Hoffman inequality \cite{hoffman2003approximate}, \cite[Theorem~4.2]{leventhal2010randomized}, we have
\begin{equation}
\label{ghadboland1❤️}
\begin{aligned}
\sigma^{2}_{\mathrm{min}}\left(\boldsymbol{\mathcal{B}}\right) \left\|\mbX^{(k)}-\mbX\right\|^2_{\mathrm{F}}\leq \left\|\left(\mbt-\mathcal{A}\left(\mbX^{(k)}\right)\right)^{+}\right\|^2_{2}.
\end{aligned}
\end{equation}
Combining \eqref{ghadboland1❤️} with \eqref{kvm_17} results in
\begin{equation}
\label{MySt_20}
\begin{aligned}
\left\|\mbe_k\right\|^2_2 \leq \left\|\mbe_{k-1}\right\|^2_2-\delta^2_k \sigma^{2}_{\mathrm{min}}\left(\boldsymbol{\mathcal{B}}\right) \left\|\mbX^{(k)}-\mbX\right\|^2_{\mathrm{F}},
\end{aligned}
\end{equation}
where according to the proof presented in Appendix~\ref{Stephania_2},
the minimum singular value of one-bit MC matrix $\boldsymbol{\mathcal{B}}$ is given by $\sigma^{2}_{\mathrm{min}}\left(\boldsymbol{\mathcal{B}}\right)=m$ which proves the theorem.
\section{Convergence Analysis of Randomized OB-SVT}
\label{Stephania_4}
Define $\mbe_k=\boldsymbol{y}^{(k)}_s-\boldsymbol{y}_s$, $\widehat{t}=\mbS\mbt$, and $\widehat{\boldsymbol{\mathcal{B}}}=\mbS\boldsymbol{\mathcal{B}}$. Then, the error norm is written as
\begin{equation}
\label{MySt_100}
\begin{aligned}
\left\|\mbe_k\right\|^2_2 
& =\left\|\boldsymbol{y}^{(k-1)}_s-\boldsymbol{y}_s+\delta_k\left(\widehat{\mbt}-\widehat{\boldsymbol{\mathcal{B}}}\operatorname{vec}\left(\mbX^{(k)}\right)\right)^{+}\right\|^2_2\\
&= \left\|\mbe_{k-1}\right\|^2_2+ 2\delta_k \mbe^{\top}_{k-1} \left(\widehat{\mbt}-\widehat{\boldsymbol{\mathcal{B}}}\operatorname{vec}\left(\mbX^{(k)}\right)\right)^{+}\\
&+ \delta^{2}_{k} \left\|\left(\widehat{\mbt}-\widehat{\boldsymbol{\mathcal{B}}}\operatorname{vec}\left(\mbX^{(k)}\right)\right)^{+}\right\|^2_2.
\end{aligned}
\end{equation}
Since a nonnegative vector, $\delta_k\left(\widehat{\mbt}-\widehat{\boldsymbol{\mathcal{B}}}\operatorname{vec}\left(\mbX^{(k)}\right)\right)^{+}$, is added to $\boldsymbol{y}^{(k-1)}_s$ at each iteration, we can write
\begin{equation}
\label{myghadboland❤️}
\begin{aligned}
\boldsymbol{y}^{(k-1)}_s+\delta_k\left(\widehat{\mbt}-\widehat{\boldsymbol{\mathcal{B}}}\operatorname{vec}\left(\mbX^{(k)}\right)\right)^{+}\preceq\boldsymbol{y}_s,    
\end{aligned}
\end{equation}
which together with 
$\delta_k\left(\widehat{\mbt}-\widehat{\boldsymbol{\mathcal{B}}}\operatorname{vec}\left(\mbX^{(k)}\right)\right)^{+}\succeq\mathbf{0}$ leads to
\begin{equation}
\label{myghadboland❤️❤️}
\begin{aligned}
\delta_k \mbe^{\top}_{k-1} \left(\widehat{\mbt}-\widehat{\boldsymbol{\mathcal{B}}}\operatorname{vec}\left(\mbX^{(k)}\right)\right)^{+}\leq-\delta^2_k \left\|\left(\widehat{\mbt}-\widehat{\boldsymbol{\mathcal{B}}}\operatorname{vec}\left(\mbX^{(k)}\right)\right)^{+}\right\|^2_2.
\end{aligned}
\end{equation}
Combining \eqref{myghadboland❤️❤️} with \eqref{MySt_100} leads to
\begin{equation}
\label{MySt_2000}
\begin{aligned}
\left\|\mbe_k\right\|^2_2 \leq \left\|\mbe_{k-1}\right\|^2_2-\delta^2_k \left\|\left(\widehat{\mbt}-\widehat{\boldsymbol{\mathcal{B}}}\operatorname{vec}\left(\mbX^{(k)}\right)\right)^{+}\right\|^2_2.
\end{aligned}
\end{equation}
Since the infinite norm is constrained by the norm-$2$, \eqref{MySt_2000} is rewritten as
\begin{equation}
\label{MySt_200000}
\begin{aligned}
\left\|\mbe_k\right\|^2_2 \leq \left\|\mbe_{k-1}\right\|^2_2-\delta^2_k \left\|\left(\widehat{\mbt}-\widehat{\boldsymbol{\mathcal{B}}}\operatorname{vec}\left(\mbX^{(k)}\right)\right)^{+}\right\|^2_{\infty}.
\end{aligned}
\end{equation}
By taking the expectation over the error, we have
\begin{equation}
\label{MySt_20000}
\begin{aligned}
\mathbb{E}_{\mbS}\left\{\left\|\mbe_k\right\|^2_2\right\} \leq \left\|\mbe_{k-1}\right\|^2_2-\delta^2_k \mathbb{E}_{\mbS}\left\{\left\|\left(\widehat{\mbt}-\widehat{\boldsymbol{\mathcal{B}}}\operatorname{vec}\left(\mbX^{(k)}\right)\right)^{+}\right\|^2_{\infty}\right\}.
\end{aligned}
\end{equation}
We can apply the Jensen's inequality as follows:
\begin{equation}
\label{kvm_18}
\begin{aligned}
\mathbb{E}_{\mbS}\{\|(\widehat{\mbt}-\widehat{\boldsymbol{\mathcal{B}}}&\operatorname{vec}(\mbX^{(k)}))^{+}\|^2_{\infty}\}\geq\\&(\mathbb{E}_{\mbS}\{\|(\widehat{\mbt}-\widehat{\boldsymbol{\mathcal{B}}}\operatorname{vec}(\mbX^{(k)}))^{+}\|_{\infty}\})^2.
\end{aligned}
\end{equation}
By taking advantage from the estimate for the maximum of independent normal random variables \cite[Section~2.5.2]{vershynin2018high},
the following bound is obtained:
\begin{equation}
\label{St_1400}
\begin{aligned}
&\mathbb{E}_{\mbS}\left\{\left\|\left(\widehat{\mbt}-\widehat{\boldsymbol{\mathcal{B}}}\operatorname{vec}\left(\mbX^{(k)}\right)\right)^{+}\right\|_{\infty}\right\} 
\\
&\geq c\left\|\left(\Tilde{\mbt}-\Tilde{\boldsymbol{\mathcal{B}}}\operatorname{vec}\left(\mbX^{(k)}\right)\right)^{+}\right\|_2 \sqrt{\log s},
\end{aligned}
\end{equation}
where $c$ is a positive value, $\mbs_{t}$ and $\mbg_{t}$ 
correspond to the $t$-th rows of $\mbS$ and $\mbG$, 
while $\Tilde{\mbt}\in\mathbb{R}^{m^{\prime}}$ and $\Tilde{\boldsymbol{\mathcal{B}}}\in\mathbb{R}^{m^{\prime}\times n_1n_2}$ denote the blocks of $\mbt$ and $\boldsymbol{\mathcal{B}}$, respectively.
Combining \eqref{kvm_18} and \eqref{St_1400} with \eqref{MySt_20000} results in
\begin{equation}
\label{kvm_19}
\begin{aligned}
\mathbb{E}_{\mbS}\left\{\left\|\mbe_k\right\|^2_2\right\} \leq \left\|\mbe_{k-1}\right\|^2_2-\delta^2_k c^2 \log s \left\|\left(\Tilde{\mbt}-\Tilde{\boldsymbol{\mathcal{B}}}\operatorname{vec}\left(\mbX^{(k)}\right)\right)^{+}\right\|_2^2.
\end{aligned}
\end{equation}
According to the Hoffman inequality \cite{hoffman2003approximate}, \cite[Theorem~4.2]{leventhal2010randomized},
we have
\begin{equation}
\label{myghadboland1❤️}
\begin{aligned}
\sigma^{2}_{\mathrm{min}}\left(\Tilde{\boldsymbol{\mathcal{B}}}\right) \left\|\mbX^{(k)}-\mbX\right\|^2_{\mathrm{F}}\leq \left\|\left(\Tilde{\mbt}-\Tilde{\boldsymbol{\mathcal{B}}}\operatorname{vec}\left(\mbX^{(k)}\right)\right)^{+}\right\|^2_{2}.
\end{aligned}
\end{equation}
Based on this result, \eqref{kvm_19} can be rewritten as
\begin{equation}
\label{MySt_200}
\begin{aligned}
\mathbb{E}\left\{\left\|\mbe_k\right\|^2_2\right\} \leq \left\|\mbe_{k-1}\right\|^2_2-\delta^2_k c^2\sigma^{2}_{\mathrm{min}} \left(\Tilde{\boldsymbol{\mathcal{B}}}\right) \log s\left\|\mbX^{(k)}-\mbX\right\|^2_{\mathrm{F}}.
\end{aligned}
\end{equation}
Similar to the proof provided in Appendix~\ref{Stephania_2}, one can simply show that $\sigma^{2}_{\mathrm{min}} (\Tilde{\boldsymbol{\mathcal{B}}})=1$ which proves the theorem.
\section{Proof of Lemma~\ref{St_lemma}}
\label{Stephania_5}
According to the first step of OB-SVT algorithms, we write
\begin{equation}
\label{ST-AR1}
\left\|\mbX^{(k)}-\mbX\right\|^2_{\mathrm{F}}= \left\|\mathcal{D}_\tau\left(\mathcal{A}^{\star}\left(\boldsymbol{y}^{(k-1)}\right)\right)-\mathcal{D}_\tau\left(\mathcal{A}^{\star}\left(\boldsymbol{y}\right)\right)\right\|^2_{\mathrm{F}}.
\end{equation}
Consider an operator function $\mathcal{G}_f$ applied to a matrix $\mbX$ with rank $r$ as follows:
$\mathcal{G}_f\left(\mbX\right) = \sum^{r}_{k=1} f\left(\sigma_k\right)\mbu_k\mbv^{\mathrm{H}}_k$,
where $\left\{\sigma_k, \mbu_k, \mbv_k\right\}^{r}_{k=1}$ are singular values of $\mbX$ and its corresponding singular vectors, and $f$ is a $L$-Lipschitz continuous projector function. As comprehensively discussed in \cite[Theorem~4.2]{andersson2016operator}, the following relation holds for two matrices $\mbX_1$ and $\mbX_2$ belonging to the Hilbert space $\mathcal{H}$:
\begin{equation}
\label{Lip1}
\left\|\mathcal{G}_f\left(\mbX_1\right)-\mathcal{G}_f\left(\mbX_2\right)\right\|_{\mathrm{F}} \leq L \left\|\mbX_1-\mbX_2\right\|_{\mathrm{F}}.
\end{equation}
In the OB-SVT algorithm, the function $f$ 
is an operator which compares the singular values of $\mbX$ with a fixed threshold $\tau$ and then 
eliminate the smaller ones, which is straightforward that it satisfies \eqref{Lip1} with $L=1$. Therefore, one can conclude
\begin{equation}
\label{Lip2}
\left\|\mathcal{D}_\tau\left(\mbX_1\right)-\mathcal{D}_\tau\left(\mbX_2\right)\right\|^2_{\mathrm{F}}\leq \left\|\mbX_1-\mbX_2\right\|^2_{\mathrm{F}},~\forall\mbX_1, \mbX_2 \in \mathcal{H}.
\end{equation}
Combining this result with \eqref{ST-AR1} results in
\begin{equation}
\label{kvm_20}
\left\|\mbX^{(k)}-\mbX\right\|^2_{\mathrm{F}}\leq\left\|\mathcal{A}^{\star}\left(\boldsymbol{y}^{(k-1)}\right)-\mathcal{A}^{\star}\left(\boldsymbol{y}\right)\right\|^2_{\mathrm{F}}.
\end{equation} 
Also, we have 
\begin{equation}
\label{ST-AR2}
\left\|\mathcal{A}^{\star}\left(\boldsymbol{y}^{(k-1)}\right)-\mathcal{A}^{\star}\left(\boldsymbol{y}\right)\right\|^2_{\mathrm{F}}\leq \sigma^2_{\mathrm{max}}\left(\mathcal{A}^{\star}\right) \left\|\boldsymbol{y}^{(k-1)}-\boldsymbol{y}\right\|^2_2,
\end{equation}
which leads to
\begin{equation}
\label{ST-AR20}
\left\|\mbX^{(k)}-\mbX\right\|^2_{\mathrm{F}} \leq \sigma^2_{\mathrm{max}}\left(\mathcal{A}^{\star}\right) \left\|\boldsymbol{y}^{(k-1)}-\boldsymbol{y}\right\|^2_{2}.
\end{equation}
It is straightforward to verify that $\sigma^2_{\mathrm{max}}\left(\mathcal{A}^{\star}\right)=\sigma^2_{\mathrm{max}}\left(\boldsymbol{\mathcal{B}}^{\dagger}\right)$. Note that 
\begin{equation}
\label{Stephanie_Arian}
\begin{aligned}
\boldsymbol{\mathcal{B}}^{\dagger} = \left(\mbP^{\top} \Tilde{\bOmega}^{\top}\Tilde{\bOmega}\mbP\right)^{-} \boldsymbol{\mathcal{B}}^{\top}= \left(m\mbP^{\top}\mbP\right)^{-} \boldsymbol{\mathcal{B}}^{\top}
= \frac{1}{m} \boldsymbol{\mathcal{B}}^{\top},
\end{aligned}
\end{equation}
where $\Tilde{\bOmega}$ is defined in \eqref{eq:9}. Therefore, the maximum singular value of $\boldsymbol{\mathcal{B}}^{\dagger}$ is given by
\begin{equation}
\label{StephanieEamaz}
\begin{aligned}
\sigma^2_{\mathrm{max}}\left(\boldsymbol{\mathcal{B}}^{\dagger}\right) 
= \frac{1}{m^2} \sigma^2_{\mathrm{max}}\left(\boldsymbol{\mathcal{B}}^{\top}\right)=
\frac{1}{m^2} \sigma^2_{\mathrm{max}}\left(\boldsymbol{\mathcal{B}}\right)= \frac{1}{m}, 
\end{aligned}
\end{equation}
which proves the lemma. 

\section{Proof of Proposition~\ref{prop_1}}
\label{ghadboland_1}
Using the same notations as introduced in Appendix~\ref{Stephania_2}, the error norm is written as
\begin{equation}
\label{MySteph_1}
\begin{aligned}
\left\|\mbe_k\right\|^2_2
& \leq\left\|\boldsymbol{y}^{(k-1)}-\boldsymbol{y}+\delta_k\left(\mathcal{A}\left(\mbX+\mbZ\right)-\mathcal{A}\left(\mbX^{(k)}\right)\right)\right\|^2_2\\
&\leq \left\|\mbe_{k-1}\right\|^2_2+ 2\delta_k \mbe^{\top}_{k-1} \left(\mathcal{A}\left(\mbX+\mbZ\right)-\mathcal{A}\left(\mbX^{(k)}\right)\right)\\
&+ \delta^{2}_{k} \left\|\mathcal{A}\left(\mbX+\mbZ\right)-\mathcal{A}\left(\mbX^{(k)}\right)\right\|^2_2.
\end{aligned}
\end{equation}
According to Theorem~\ref{ST_theo1}, since $\sigma^2_{\textrm{max}}\left(\boldsymbol{\mathcal{B}}\right)=m$, we have
\begin{equation}
\label{MySteph_2}
\begin{aligned}
\left\|\mbe_k\right\|^2_2 \leq \left\|\mbe_{k-1}\right\|^2_2- \left(2\delta_k-m\delta^2_k\right) \left\|\mbX^{(k)}-\mbX-\mbZ\right\|^2_{\mathrm{F}},
\end{aligned}
\end{equation}
where based on the triangle inequality, we can write
\begin{equation}
\label{MySteph_3}
\begin{aligned}
\left\|\mbe_k\right\|^2_2 \leq \left\|\mbe_{k-1}\right\|^2_2- \left(2\delta_k-m\delta^2_k\right) \left(\frac{1}{2}\left\|\mbX^{(k)}-\mbX\right\|^2_{\mathrm{F}}-\left\|\mbZ\right\|^2_{\mathrm{F}}\right).
\end{aligned}
\end{equation}
To guarantee the convergence of OB-SVT-I in the presence of noise, the error norm must be a nonincreasing function with respect to the iteration. Therefore, the noise level must satisfy $\left\|\mbZ\right\|^2_{\mathrm{F}}\leq \frac{1}{2}\left\|\mbX^{(k)}-\mbX\right\|^2_{\mathrm{F}}$.
For the OB-SVT-II algorithm, we use the same notations as introduced in Appendix~\ref{Stephania_3}, the error norm is written as
\begin{equation}
\label{kvm_21}
\begin{aligned}
\left\|\mbe_k\right\|^2_2 &= \left\|\mbe_{k-1}\right\|^2_2+ 2\delta_k \mbe^{\top}_{k-1} \left(\mbt-\mathcal{A}\left(\mbX^{(k)}\right)\right)^{+}\\
&+\delta^{2}_{k} \left\|\left(\mbt-\mathcal{A}\left(\mbX^{(k)}\right)\right)^{+}\right\|^2_2,
\end{aligned}
\end{equation}
or simply
\begin{equation}
\label{kvm_22}
\left\|\mbe_k\right\|^2_2\leq\left\|\mbe_{k-1}\right\|^2_2-\delta^2_k \left\|\left(\mbt-\mathcal{A}\left(\mbX^{(k)}\right)\right)^{+}\right\|^2_2.
\end{equation}
According to the Hoffman inequality \cite{hoffman2003approximate}, \cite[Theorem~4.2]{leventhal2010randomized}, we have
\begin{equation}
\label{kvm_23}
\begin{aligned}
\sigma^{2}_{\mathrm{min}}\left(\boldsymbol{\mathcal{B}}\right) \left\|\mbX^{(k)}-\mbX-\mbZ\right\|^2_{\mathrm{F}}\leq \left\|\left(\mbt-\mathcal{A}\left(\mbX^{(k)}\right)\right)^{+}\right\|^2_{2}.
\end{aligned}
\end{equation}
Combining \eqref{kvm_23} with \eqref{kvm_22} leads to
\begin{equation}
\label{kvm_24}
\left\|\mbe_k\right\|^2_2\leq\left\|\mbe_{k-1}\right\|^2_2-m\delta^2_k\left\|\mbX^{(k)}-\mbX-\mbZ\right\|^2_{\mathrm{F}}.
\end{equation} 
Since $\frac{1}{2}\left\|\mbX^{(k)}-\mbX\right\|^2_{\mathrm{F}}-\left\|\mbZ\right\|^2_{\mathrm{F}}\leq\left\|\mbX^{(k)}-\mbX-\mbZ\right\|^2_{\mathrm{F}}$, we can rewrite \eqref{kvm_24} as
\begin{equation}
\label{kvm_25}
\left\|\mbe_k\right\|^2_2\leq\left\|\mbe_{k-1}\right\|^2_2-\frac{1}{2}m\delta^2_k\left\|\mbX^{(k)}-\mbX\right\|^2_{\mathrm{F}}+m\delta_k^2\left\|\mbZ\right\|^2_{\mathrm{F}}.
\end{equation} 
To guarantee the convergence of OB-SVT-II in the presence of noise, the error norm must be a nonincreasing function with respect to the iteration. Therefore, the noise level must satisfy
$\left\|\mbZ\right\|^2_{\mathrm{F}}\leq \frac{1}{2}\left\|\mbX^{(k)}-\mbX\right\|^2_{\mathrm{F}}$,
which proves the proposition. Verifying the proof of the similar proposition for the randomized OB-SVT is as straightforward as the proofs of other propositions. To establish the convergence of the randomized OB-SVT algorithm in the presence of noise, it suffices to demonstrate that the \emph{expectation} of the error is non-increasing. This can be straightforwardly proven in a similar manner to the preceding proofs.
\bibliographystyle{IEEEtran}
\bibliography{references}

\end{document}